\documentclass[10pt,twocolumn]{article}
\usepackage[T1]{fontenc}
\usepackage[utf8]{inputenc}
\usepackage{mathptmx}
\usepackage[letterpaper,margin=0.75in,columnsep=20pt]{geometry}
\usepackage{microtype}
\usepackage{cite}
\usepackage{amsmath,amssymb,amsfonts}
\usepackage{algorithmic}
\usepackage{graphicx}
\usepackage{textcomp}
\usepackage{booktabs}
\usepackage{bm}
\usepackage{fancyhdr}
\fancypagestyle{plain}{\fancyhf{}\fancyhead[C]{\small\itshape Preprint. Submitted to IEEE Access.}\fancyfoot[C]{\small\thepage}}

\def\BibTeX{{\rm B\kern-.05em{\sc i\kern-.025em b}\kern-.08em
    T\kern-.1667em\lower.7ex\hbox{E}\kern-.125emX}}

\newtheorem{theorem}{Theorem}
\newtheorem{proposition}{Proposition}
\newtheorem{lemma}{Lemma}

\newtheorem{definition}{Definition}
\newtheorem{remark}{Remark}
\newtheorem{example}{Example}

\begin{document}

\title{Set-Packing and Sequence-Pair QUBOs for the 2D Cutting Stock Problem on Quantum Annealing Hardware}

\author{Miguel S\'anchez-Beato\thanks{Corresponding author: Miguel S\'anchez-Beato (e-mail: miguel.sanchez@itecam.es).}%
\thanks{This work has been submitted to the IEEE for possible publication. Copyright may be transferred without notice, after which this version may no longer be accessible. This work was carried out at the AI and Quantum Technologies Department, ITECAM, Spain.}\,, Raul Martinez, Matilde Osa, Jorge Parra, and Mario Calonge\\[3pt]
\normalsize AI and Quantum Technologies Department, ITECAM, Spain}
\date{}

\maketitle

\begin{abstract}
The two-dimensional Cutting Stock Problem (2D-CSP) is an NP-hard problem with direct economic and environmental impacts on manufacturing and logistics. We encode its fixed-plate variant, with free piece repetition and full non-overlap and containment constraints, as a Quadratic Unconstrained Binary Optimization (QUBO) problem for quantum annealing and compare two formulations from opposite encoding paradigms. The first was a coordinate-based set-packing model with one binary variable per candidate placement. Its variable count grows linearly with plate area and resolution, but its ground state is, by construction, a geometrically feasible maximum-area packing. The second is a coordinate-free sequence-pair model whose variable count is independent of plate resolution and size. We prove that this compactness has a structural limit: no coordinate-free QUBO of bounded interaction degree whose penalties vanish on every geometrically feasible layout can have a geometrically feasible ground state for 2D containment, because containment is a longest-path constraint that bounded-degree penalties cannot enforce on chains longer than their interaction order. We evaluate both formulations under multi-seed simulated annealing, simulated quantum annealing, and an exact integer-programming baseline. Hardware experiments include D-Wave minor embedding, a calibrated direct-QPU sweep, and Leap hybrid solvers in both penalty and constraint-native form, across a six-instance campaign with per-instance calibration. The hybrid solver returns our certificate configuration, tying its energy to thirteen decimal places while overflowing the plate. We claim no quantum speedup. Our contribution is an impossibility result characterizing the limits of compact packing QUBOs, and a practical rule for choosing between the two formulations.
\end{abstract}

\medskip
\noindent\textbf{Keywords:} Cutting stock problem, quantum annealing, QUBO, sequence pair, set packing, simulated annealing, minor embedding.

\section{Introduction}
\label{sec:introduction}
Minimizing material waste has direct consequences both the economic profitability and environmental footprint of the manufacturing industry. Optimal cutting patterns reduce production costs across the entire supply chain, from raw material acquisition to final product pricing. In this context, the two-dimensional Cutting Stock Problem (2D-CSP) is a foundational combinatorial optimization problem. Given a stock plate of fixed dimensions and a catalogue of demanded piece types, determine an arrangement of pieces on the plate that maximises the occupied area subject to non-overlap and containment constraints.

Exact methods based on integer programming guarantee optimal solutions but scale poorly with the number of pieces and plate dimensions \cite{beasley1985exact, iori2021exact}. Heuristics such as bottom-left placement, first-fit-decreasing, and metaheuristics trade optimality for tractability \cite{baker1980orthogonal, hopper2001empirical}. This tradeoff motivates the search for alternative optimization paradigms. Although classical heuristics are often trapped in local minima or require over-simplified geometric restrictions, Quantum Annealing offers an alternative computing paradigm. By initialising the system in the ground state of a transverse-field driver Hamiltonian and adiabatically interpolating toward the problem Hamiltonian, QA enables tunnelling through energy barriers that classical thermal fluctuations would have to climb over \cite{kadowaki1998,hauke2020}. Although a general quantum speedup over the best classical heuristics remains an open empirical question \cite{ronnow2014,yarkoni2022}, the approach has shown promise on industrial combinatorial problems and merits exploration in the cutting-stock setting.

A QUBO formulation of 2D-CSP must encode three things simultaneously: which pieces are selected, where they sit, and the geometric constraints of non-overlap and containment. Two encoding strategies sit at opposite ends of a design-space spectrum, defined by how coordinates are represented. A coordinate-based encoding introduces one binary variable per candidate placement of a piece on the plate. The resulting model is a weighted set-packing (independent-set) problem in the canonical form introduced by Lucas \cite{lucas2014}, and its variable count grows linearly with the plate area. A coordinate-free encoding encodes only the relative order of pieces through two permutations, recovering physical coordinates via a classical longest-path computation. This approach, pioneered by Murata et al. \cite{murata1996}, brought to Ising form by Terada et al. \cite{terada2018}, and refined by Okada et al. \cite{okada2024}, produces a variable count far smaller than the coordinate-based formulation on instances with a large number of candidate positions. A grid reduction based on the greatest common divisor narrows this gap on the set-packing side, though only when the dimensions share a common factor. The fundamental trade-off between the two strategies is thus clear: variable economy versus guaranteed ground-state feasibility.

We show that this trade-off is not free, and that the reason is structural rather than a matter of tuning. The coordinate-based set-packing model is exact: its ground state is, by construction, a geometrically feasible maximum-area packing. The coordinate-free model is not. Containment of a piece is a longest-path constraint over a chain of pieces, whereas penalties of bounded interaction degree can only see boundedly many pieces at a time. We prove that no coordinate-free QUBO of bounded degree whose penalties vanish on every feasible layout admits a geometrically feasible ground state for 2D containment: a chain of $d+1$ pieces can overflow the plate while every $d$ of them fit, so a degree-$d$ penalty that is exact on feasible layouts must assign the overflowing chain zero penalty as well. A faithful coordinate-free penalty would require interaction terms of order equal to the chain length, which is prohibitive on near-term annealing hardware. Penalties that also fire on feasible layouts can displace particular infeasible states, but only with instance-specific tuning and at the cost of the energy-area correspondence, a degradation we quantify empirically. The compact formulation is therefore adequate only as a sample-and-filter heuristic: sample low-energy configurations, decode their coordinates, discard those that overflow the plate, and keep the best feasible packing. It cannot be used as an exact ground-state encoding.

\textit{Contributions.} This paper makes the following contributions:
(i) a side-by-side QUBO treatment of the fixed-plate 2D Cutting Stock Problem with free piece repetition, covering a coordinate-based set-packing formulation and a coordinate-free sequence-pair formulation, with closed-form variable and coupler counts and a greatest-common-divisor grid-reduction observation that sharpens the comparison between them;
(ii) an impossibility result, namely that no bounded-degree coordinate-free QUBO with penalties that vanish on all feasible layouts can have a geometrically feasible ground state for 2D containment, together with an explicit zero-penalty infeasible configuration of our tuned Hamiltonian that is energy-degenerate with the optimum;
(iii) to our knowledge, the first evaluation of the fixed-plate, free-repetition CSP variant with full non-guillotine non-overlap and containment on real quantum annealing hardware, reporting minor-embedding behaviour and its practical capacity frontier, a provable non-embeddability bound for the sequence-pair model, a calibrated direct-QPU sweep, and Leap hybrid solves in both penalty and constraint-native form;
(iv) an honest empirical comparison of both formulations under multi-seed classical simulated annealing, simulated quantum annealing, and an exact integer-programming baseline, extended by a six-instance campaign with per-instance calibration, and complemented by an in-depth structural analysis featuring an ablation study to isolate the effect of each penalty under paired significance tests, a sensitivity and calibration-stability analysis, and an automated calibration of the Lagrangian multipliers using a Tree-structured Parzen Estimator search \cite{bergstra2011} over a documented space; and
(v) a practical selection rule that states, as a function of the reduced plate resolution and the piece-count budget, which of the two formulations is preferable. We make no claim of quantum speedup.

\textit{Paper organisation.} Section \ref{sec:sota} reviews related work in classical 2D-CSP solvers and in quantum approaches to packing problems. Section \ref{sec:problem} defines the problem on a fixed plate. Section \ref{sec:qubo} presents both QUBO formulations and their size analysis. Section \ref{sec:feasibility} develops the ground-state feasibility analysis and proves the impossibility result. Section \ref{sec:setup} describes the experimental setup, and Section \ref{sec:results} reports the empirical results from simulated annealing and D-Wave hardware against the integer-programming baseline, alongside the scaling projection. Section \ref{sec:discussion} discusses the selection rule and outlines extensions, and Section \ref{sec:conclusion} concludes.

\section{State of the Art}
\label{sec:sota}
The two-dimensional Cutting Stock Problem sits at the intersection of two largely separate research traditions: classical operations research, which has produced a mature ecosystem of exact and heuristic algorithms over the past six decades, and quantum optimization, which has only recently begun to address rectangular packing problems on programmable quantum annealing hardware. This section reviews both bodies of work, identifies the methodological divide along which the present comparison is organised, and states the specific gap that the paper addresses.

\subsection{Classical Approaches to 2D Cutting Stock}
Following the typology of W\"ascher et al. \cite{waescher2007}, the problem considered in this work falls within the family of two-dimensional Single Large Object Placement Problems with strongly heterogeneous item assortments. This family is known to be NP-hard, and classical solution strategies divide into two complementary camps. Exact methods based on mixed-integer linear programming or branch-and-cut deliver provably optimal solutions but become computationally intractable for instances of industrially relevant size, where the number of candidate piece-placement combinations grows combinatorially. Heuristic methods such as bottom-left, first-fit-decreasing and level-based packing, and meta-heuristics such as simulated annealing, genetic algorithms and tabu search, trade optimality guarantees for scalability and remain the dominant production approach in industry. The structural tension between these two extremes, guaranteed optimality at intractable cost versus tractable cost without quality guarantees, is precisely what motivates the exploration of alternative computing paradigms.

\subsection{Quantum Annealing as an Optimization Paradigm}
Quantum annealing (QA) is a heuristic combinatorial-optimization algorithm rooted in the adiabatic theorem of quantum mechanics. The original proposal of Kadowaki and Nishimori \cite{kadowaki1998} showed that adding a transverse-field driver Hamiltonian to a problem Ising Hamiltonian, and slowly removing the driver, can guide the system toward the ground state encoding the solution of an optimization problem. Modern reviews \cite{hauke2020,yarkoni2022} cover both the theoretical foundations of QA and the spectrum of industrial applications explored on current D-Wave hardware, ranging from scheduling and logistics to finance and machine learning. Crucially, these reviews also highlight the hybrid quantum-classical portfolio of solvers, which accept problems far larger than the bare quantum processing unit.

Problems addressed by QA must be expressed as a Quadratic Unconstrained Binary Optimization (QUBO) instance, or equivalently as an Ising model. The systematic mapping of NP-hard combinatorial problems to this form was catalogued by Lucas \cite{lucas2014}, who provides canonical QUBO formulations for the twenty-one NP-complete problems of Karp and several additional combinatorial instances of practical interest. A central design choice in any such formulation is the encoding of integer-valued auxiliary variables. The standard binary (radix-2) encoding competes with the domain-wall encoding introduced by Chancellor \cite{chancellor2019}, which trades fewer binary variables for higher qubit-connectivity requirements on the underlying hardware.

It is important to note that, while QA is the only currently programmable quantum optimization heuristic at scale, a general quantum speedup over the best classical heuristics has not been empirically established for combinatorial optimization. The seminal study of R{\o}nnow et al. \cite{ronnow2014} introduced a formal methodology for measuring quantum speedup and reported no evidence of such speedup over optimized simulated annealing on random spin-glass benchmarks executed on early D-Wave hardware. Subsequent work has refined both hardware and benchmarks, but the question of when, if ever, current-generation quantum annealers outperform their classical counterparts on industrially relevant problems remains open. The present work makes no claim to settle this question; it contributes a structural characterization that tells a practitioner which of two encodings of one industrial problem class is worth placing on the hardware at all.

A central practical difficulty when deploying QUBO formulations is the calibration of the Lagrangian multipliers that balance constraint penalties against the objective. Manual tuning is the default in the literature \cite{yarkoni2022,okada2024} and limits reproducibility. Cellini et al. \cite{cellini2024} address this through an augmented-Lagrangian framework that derives multipliers analytically. The present work adopts a complementary data-driven approach, using Bayesian optimization with a Tree-structured Parzen Estimator over the multiplier space \cite{bergstra2011,akiba2019}.

\subsection{Quantum Approaches to Packing Problems}
The application of QA to packing problems is a young but active line of research, concentrated mostly on the one-dimensional Bin Packing Problem (1D-BPP). Cellini et al. \cite{cellini2024} proposed QAL-BP, an augmented-Lagrangian QUBO formulation for 1D-BPP in which the penalty multipliers are derived analytically rather than tuned empirically. Garc\'ia-de-Andoin et al. have contributed two complementary studies on the same problem: a hybrid classical-quantum heuristic in which QA samples feasible single-bin configurations that are then assembled by a classical post-processor \cite{garciadeandoin2022a}, and a comparative empirical benchmark against established classical methods \cite{garciadeandoin2022b}. Bozhedarov et al. \cite{bozhedarov2024} apply quantum and quantum-inspired solvers to a real industrial instance of minimum bin packing arising from the storage of spent nuclear fuel. Moving beyond one dimension, Romero et al. \cite{romero2023} solve realistic three-dimensional bin packing instances with the Leap Constrained Quadratic Model hybrid solver, showing that hybrid pipelines already absorb packing workloads of industrial size. However, the geometric reasoning in that line is volumetric rather than positional. Closest to our problem family, Arai and Haraguchi \cite{arai2021} formulate an Ising model for a two-dimensional cutting stock variant whose objective is the minimization of setup cost under a guillotine-style cut structure, evaluated through classical simulation, with execution on a quantum annealer left as future work.

A common feature of this body of work is its restriction to one-dimensional packing, to volumetric accounting, or to structured cut patterns. To the best of our knowledge, prior QA studies on two-dimensional rectangle packing with full positional reasoning \cite{terada2018,okada2024} target the variant with unique items. In contrast, the present work tackles the Cutting Stock variant with free type repetition on a fixed plate and full non-guillotine non-overlap and containment. Consequently, this study appears to be the first to evaluate this variant on real quantum annealing hardware.

\subsection{Encoding the Geometry: Coordinate-Based and Coordinate-Free QUBOs}The sequence-pair representation was introduced by Murata et al. \cite{murata1996} in the context of VLSI module placement. A pair of permutations $(\Gamma_{+},\Gamma_{-})$ on $N$ indices jointly encodes all relative positioning information required to recover a non-overlapping placement of $N$ rectangles: pieces $a$ and $b$ are in the left-of relation when $a$ precedes $b$ in both permutations, and in the below relation when $a$ precedes $b$ in $\Gamma_{+}$ but follows it in $\Gamma_{-}$. The representation is provably complete, in the sense that every non-overlapping placement admits a sequence pair, and conversely, every sequence pair determines a non-overlapping placement. This mapping is recovered through a longest-path computation on the horizontal and vertical constraint DAGs derived from the relations, computable in polynomial time by topological relaxation \cite{cormen2009}. A coordinate-free encoding encodes only the relative order of pieces through two permutations, recovering physical coordinates via a classical longest-path computation. This approach, pioneered by Murata et al. \cite{murata1996}, brought to Ising form by Terada et al. \cite{terada2018}, and refined by Okada et al. \cite{okada2024}, produces a variable count far smaller than the coordinate-based formulation on instances with a large number of candidate positions.

At the opposite extreme sit coordinate-based encodings, which discretize the plate and attach one binary variable to each candidate position of each piece. Non-overlap then becomes a pairwise exclusion between intersecting placements, and the model collapses to a weighted set-packing or maximum-weight independent-set QUBO in the canonical form introduced by Lucas \cite{lucas2014}. The variable count inherits the plate resolution, which is the textbook objection to this encoding, but every constraint of the problem is pairwise by nature, so no information is lost: the ground state is geometrically feasible by construction. The present paper formalises this asymmetry. Okada et al. mitigate the structural loopholes of the coordinate-free encoding with their search-space restriction. We remove that restriction, which conflicts with free piece repetition, and replace it with two structural penalties. We then prove that no replacement of bounded degree can close the gap completely, locating the exact boundary between the two encoding philosophies.

\section{Problem Definition}
\label{sec:problem}

\subsection{The 2D Cutting Stock Problem on a Fixed Plate}
We consider a single-plate 2D Cutting Stock Problem. A rectangular plate of fixed dimensions $W_{P}\times H_{P}$ is given, together with a catalogue of $M$ piece types. Each type $t\in\{1,\dots,M\}$ has integer dimensions $(w_{t},h_{t})$ and may appear any number of times in a valid cutting pattern (free repetition).

A perimeter margin $m\geq 0$ may be reserved around the border of the plate, for example for clamping or cutting tolerance, reducing the usable area to an effective plate of dimensions $W_{P}^{\mathrm{eff}}=W_{P}-2m$ and $H_{P}^{\mathrm{eff}}=H_{P}-2m$. An inter-piece spacing can be folded into the effective piece dimensions in the same way. All pieces must be placed axis-aligned and entirely within the effective area, without overlap. The objective is to select and arrange pieces so as to minimize waste, or equivalently to maximize the total occupied area. Throughout the experimental sections we take $m=0$, so effective and nominal dimensions coincide, and we write $W$, $H$ for the plate dimensions and $A_{\mathrm{plate}}=WH$ for its area.

\subsection{Expanded Catalogue with Rotation}
To support 90-degree rotation, each non-square type generates two entries in an expanded catalogue of $P$ orientations: entry $(t,0)$ with dimensions $(w_{t},\,h_{t})$ and, when $w_{t}\neq h_{t}$, entry $(t,1)$ with dimensions $(h_{t},\,w_{t})$. We index the expanded catalogue by $k\in\{1,\dots,P\}$, write $(\tilde w_{k},\tilde h_{k})$ for the effective dimensions and $a_{k}=\tilde w_{k}\tilde h_{k}$ for the area of entry $k$, and let $a_{\min}=\min_{k}a_{k}$ and $a_{\max}=\max_{k}a_{k}$. The quantity $n_{\max}=\lfloor A_{\mathrm{plate}}/a_{\min}\rfloor$ bounds the number of pieces in any packing.

\subsection{Grid Reduction by Common Divisors}
\begin{lemma}\label{lem:gcd}
Let $g$ divide $W$, $H$, $m$ and every $w_{t}$, $h_{t}$. Then the instance is equivalent to its rescaling by $1/g$: optimal areas correspond under multiplication by $g^{2}$, and an optimal packing of one instance maps to an optimal packing of the other.
\end{lemma}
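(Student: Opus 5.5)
The plan is to set up an explicit bijection between feasible packings of the original instance $\mathcal{I}$ and feasible packings of the rescaled instance $\mathcal{I}/g$, and to check that it multiplies occupied area by exactly $g^{2}$. Both optimality statements then follow immediately. A packing here is a finite multiset of placed pieces: each is an entry $k$ of the expanded catalogue together with a lower-left corner $(x,y)$, it lies inside the effective plate $[m, W-m]\times[m,H-m]$, and distinct pieces have disjoint interiors. Since $g$ divides every $w_t$ and $h_t$, it divides every $\tilde w_k,\tilde h_k$. Rotation commutes with scaling, so the expanded catalogues of the two instances correspond entry by entry.

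The central issue is whether placements are restricted to integer coordinates; the problem statement does not say so explicitly. If arbitrary real coordinates are allowed, the map $\phi(x,y)=(x/g,y/g)$ is a homothety of the plane. It sends axis-aligned rectangles to axis-aligned rectangles, preserves containment and disjointness of interiors, and has inverse $(x,y)\mapsto(gx,gy)$, so it is a bijection on feasible packings. It scales each rectangle area by $1/g^{2}$, and hence the total occupied area as well.

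If placements must be integral, which is the case the set-packing encoding uses and where the lemma actually has content, I need one more step: every feasible integer packing of $\mathcal{I}$ can be shifted to one whose corner coordinates are all divisible by $g$, without changing its area. I would do this with a compaction argument. Repeatedly slide pieces left and then down, as a bottom-left normalization. In the resulting packing, every piece's $x$-coordinate equals either the left boundary $m$ or the right edge $x'+\tilde w_{k'}$ of some piece that blocks it. By induction along the horizontal constraint DAG (a longest-path argument), each $x$ is then a sum of $m$ and widths, all multiples of $g$; the same argument handles $y$. Compaction keeps the packing feasible and preserves its multiset of pieces, so its area is unchanged. Such $g$-aligned packings are exactly the images of integer packings of $\mathcal{I}/g$ under $(x,y)\mapsto(gx,gy)$. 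This gives a surjection onto the optimal value, and the optima then correspond.

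The main obstacle is termination of the compaction. Alternating left and down slides might cycle in principle. I would rule this out with a potential argument: each effective slide strictly decreases $\sum_i (x_i+y_i)$, which is a nonnegative integer, so the process stops at a packing where no piece can move left or down. At that fixed point, each coordinate is blocked by the boundary or by a neighbour, which is exactly what the inductive divisibility argument needs. The induction should run in order of increasing coordinate, so that the blocking neighbour's coordinate is already known to be a multiple of $g$.
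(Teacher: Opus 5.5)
Your proposal is correct and follows the same route as the paper. Both push every piece maximally left and down, so that each coordinate becomes $m$ plus a sum of piece dimensions and hence a multiple of $g$, which places an optimal packing on the $g$-grid, i.e.\ the unit grid of the rescaled instance. Your potential-function termination argument and your induction in order of increasing coordinate supply details that the paper's one-line proof leaves implicit.
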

\noindent\textit{Proof:} In any axis-aligned packing, push every piece maximally to the left and then maximally down; each resulting coordinate is a sum of piece dimensions and hence a multiple of $g$, so an optimal packing exists on the $g$-grid, which is the unit grid of the rescaled instance. \hfill$\blacksquare$

The reduction matters because the variable count of the coordinate-based formulation of Section \ref{sec:qubo} is driven by the number of grid positions, which shrinks by a factor approaching $g^{2}$ on plates large relative to the pieces, while the variable count of the coordinate-free formulation is invariant under it. Lemma \ref{lem:gcd} should therefore always be applied as preprocessing before comparing the two models.

\subsection{Reference Instance}
The reference instance used throughout the paper has a $7\times 5$ plate ($A_{\mathrm{plate}}=35$), margin $m=0$, and the catalogue $\{(4,3),(2,5),(3,2),(2,2),(4,1),(3,1)\}$ of $M=6$ physical types, which expands under rotation to $P=11$ orientations with $a_{\min}=3$, $a_{\max}=12$ and $n_{\max}=11$. The greatest common divisor of all dimensions is $g=1$, so Lemma \ref{lem:gcd} yields no reduction. The exact integer-programming baseline of Section \ref{sec:setup} proves that the optimum is $A^{\star}=35$, a perfect packing, attained for example by two vertical $2\times 5$ columns on $x\in[0,4)$, a rotated $3\times 4$ block on $[4,7)\times[0,4)$ and a $3\times 1$ strip on $[4,7)\times[4,5)$; no three catalogue entries reach area 35, so every optimal packing uses at least $n^{\star}=4$ pieces.

\section{QUBO Formulations}
\label{sec:qubo}

\subsection{Coordinate-Based Set-Packing Formulation}
\label{sec:setpacking}
The coordinate-based model introduces one binary variable per candidate placement. A placement is a pair $p=(k,(u,v))$ of a catalogue entry and an anchor position with $0\leq u\leq W-\tilde w_{k}$ and $0\leq v\leq H-\tilde h_{k}$, so the piece lies entirely inside the plate by construction and containment never needs to be penalised. The placement set $\mathcal{P}$ has cardinality
\begin{equation}\label{eq:vsp}
V_{\mathrm{sp}}=|\mathcal{P}|=\sum_{k=1}^{P}\,(W-\tilde w_{k}+1)(H-\tilde h_{k}+1),
\end{equation}
which is $182$ on the reference instance. Two placements conflict when their rectangles intersect; the conflict set $\mathcal{C}$ contains $7\,005$ unordered pairs on the reference instance. The Hamiltonian is
\begin{equation}\label{eq:hsp}
H_{\mathrm{SP}}=\sum_{p\in\mathcal{P}}\,(\varepsilon-\mu\,a_{p})\,x_{p}\;+\;\lambda\sum_{(p,q)\in\mathcal{C}}x_{p}\,x_{q},
\end{equation}
with $\mu>0$ the area reward, $\lambda>0$ the overlap penalty, and $\varepsilon\geq 0$ an optional cardinality tie-break that prefers packings with fewer, larger pieces among those of equal area. Section \ref{sec:feasibility} proves that $\lambda>\mu\,a_{\max}$ and $0\leq\varepsilon<\mu/n_{\max}$ make the model exact: every ground state is a geometrically feasible maximum-area packing. This is a weighted set-packing QUBO in the sense of Lucas \cite{lucas2014}. Its decoder is trivial, since active variables carry their own coordinates.

Two structural properties of the conflict graph matter for hardware. First, conflicts are local: a placement intersects only placements anchored within one piece diameter of it, so the degree of a placement is bounded by a catalogue-dependent constant independent of the plate size. On the reference instance the maximum degree is $140$, and on plates from $15\times 11$ upward it saturates at $289$ for this catalogue (Section \ref{sec:results}). Second, the graph contains no global structure: its cliques are the sets of placements covering a single cell, again of catalogue-bounded size. The model therefore stays sparse as the plate grows, with $7\,187$ nonzero terms ($182$ linear plus $7\,005$ quadratic) on the reference instance.

\subsection{Coordinate-Free Sequence-Pair Formulation}
\label{sec:seqpair}
The coordinate-free model operates on an $N\times N$ oblique grid, where $N$ is a user-defined budget on the number of simultaneously placed pieces. Each grid cell is a pair $(i,j)$ with $i,j\in\{0,\dots,N-1\}$: the row index $i$ encodes the position in $\Gamma_{+}$ and the column index $j$ the position in $\Gamma_{-}$, so a valid configuration activates exactly one cell per row and per column, realising a permutation $\sigma$. For two occupied cells $a=(i_{a},j_{a})$ and $b=(i_{b},j_{b})$ with $i_{a}<i_{b}$, the relation follows from the columns alone: $j_{a}<j_{b}$ places $a$ to the left of $b$, and $j_{a}>j_{b}$ places $a$ below it. Physical coordinates are not encoded. They are recovered classically by a longest-path relaxation on the horizontal and vertical constraint DAGs induced by these relations, in $O(N^{2})$ time after a topological sort \cite{cormen2009}, which produces the tightest packing consistent with the sequence pair. Completeness of the representation \cite{murata1996} guarantees that decoded layouts never overlap. Only containment can fail, the requirement that $X_{s}+\tilde w_{k_{s}}\leq W$ and $Y_{s}+\tilde h_{k_{s}}\leq H$ for every occupied slot $s$. As Section \ref{sec:feasibility} shows, this failure mode is unavoidable.

\textbf{Decision variables.} For every cell $(i,j)$ and every catalogue entry $k\in\{0,1,\dots,P\}$, where $k=0$ is an empty marker of zero dimensions, a placement bit $x_{ijk}$ states that the cell hosts entry $k$. Following \cite{okada2024}, envelope-chain bits $y_{ijk}$ and $z_{ijk}$, defined for real entries $k\geq 1$ only, state that the piece at $(i,j)$ belongs to the horizontal (respectively vertical) boundary chain of the layout. Slack bits $s^{W}_{r}$ ($r<R_{W}$), $s^{H}_{r}$ ($r<R_{H}$) and $s^{A}_{r}$ ($r<R_{A}$) absorb, in radix-2, the deficits of the two boundary equalities and of the area budget. The variable count is
\begin{equation}\label{eq:vseq}
V_{\mathrm{seq}}=N^{2}(P+1)+2N^{2}P+R_{W}+R_{H}+R_{A},
\end{equation}
which for the reference instance with $N=7$ and $P=11$ evaluates to $588+1\,078+14=1\,680$.

\textbf{Normalization.} All dimensional quantities entering the boundary penalties are normalised by the catalogue perimeter sum
\begin{equation}\label{eq:norm}
\begin{split}
S &= \sum_{k=1}^{P}\bigl(\tilde w_{k}+\tilde h_{k}\bigr), \qquad \hat w_{k}=\tilde w_{k}/S, \\
  &\quad \hat h_{k}=\tilde h_{k}/S, \quad \hat W_{P}=W/S, \quad \hat H_{P}=H/S,
\end{split}
\end{equation}
so that a single set of penalty multipliers remains approximately transferable across instances of different physical scale. On the reference instance, $S=60$. Area quantities ($a_{k}$, $A_{\mathrm{plate}}$) are kept in plate units so that the area slack quantum is exactly one unit of area.

\textbf{Permutation penalty.} Every row and every column must contain exactly one active cell, a real piece or the empty marker:
\begin{equation}\label{eq:hperm}
H_{\mathrm{perm}}=\sum_{i=0}^{N-1}\Bigl(\sum_{j,k}x_{ijk}-1\Bigr)^{2}+\sum_{j=0}^{N-1}\Bigl(\sum_{i,k}x_{ijk}-1\Bigr)^{2}.
\end{equation}

\textbf{Chain consistency.} Envelope membership must imply placement, as a one-directional implication ($y_{ijk}\leq x_{ijk}$ and analogously for $z$):
\begin{equation}\label{eq:hcons}
H_{\mathrm{cons}}=\sum_{i,j}\sum_{k\geq 1}\bigl[\,y_{ijk}\,(1-x_{ijk})+z_{ijk}\,(1-x_{ijk})\,\bigr].
\end{equation}
A placed piece is not forced into a chain. The area and pairwise penalties below only partially address the structural laxity this leaves.

\textbf{Chain-path validity.} Members of the horizontal envelope must be pairwise in the left-of relation, and members of the vertical envelope pairwise in the below relation. Writing $\mathcal{B}$ (respectively $\mathcal{L}$) for the set of unordered cell pairs in the below (left-of) relation, the conflicting memberships are penalised:
\begin{equation}\label{eq:hpath}
H_{\mathrm{path}}=\sum_{(a,b)\in\mathcal{B}}\sum_{k_{a},k_{b}\geq 1}y_{ak_{a}}\,y_{bk_{b}}+\sum_{(a,b)\in\mathcal{L}}\sum_{k_{a},k_{b}\geq 1}z_{ak_{a}}\,z_{bk_{b}},
\end{equation}
with the shorthand $y_{ak}=y_{i_{a}j_{a}k}$. Pairs sharing a row or a column cannot be simultaneously occupied once $H_{\mathrm{perm}}$ is satisfied, so only the opposite relation needs penalising.

\textbf{Boundary equalities.} The horizontal envelope spans the width of the layout and the vertical envelope its height. Their normalised extents must match the plate up to slack. With slack quantum $\Delta_{W}=\hat w_{\min}/2$, half the smallest normalised width, and $R_{W}=\min\{6,\lceil\log_{2}(\hat W_{P}/\Delta_{W}+1)\rceil\}$ bits (and the analogous $\Delta_{H}$, $R_{H}$),
\begin{equation}\label{eq:hbnd}
H^{W}_{\mathrm{bnd}}=\Bigl(\sum_{i,j,k\geq 1}y_{ijk}\,\hat w_{k}+\Delta_{W}\sum_{r=0}^{R_{W}-1}2^{r}s^{W}_{r}-\hat W_{P}\Bigr)^{2},
\end{equation}
and symmetrically for $H^{H}_{\mathrm{bnd}}$ with $z$, $\hat h_{k}$, $\hat H_{P}$. With integer plate and piece dimensions, $\hat W_{P}$ is an integer multiple of $\Delta_{W}$ (here $14\Delta_{W}$) within the 4-bit range, so the residual can vanish exactly. Chain over-length cannot be absorbed and is therefore penalised quadratically.

\textbf{Total-area constraint.} A configuration may activate placement bits outside any envelope chain, gaining area reward while every penalty above stays at zero. The area budget closes this orphan loophole:
\begin{equation}\label{eq:harea}
H_{\mathrm{area}}=\Bigl(\sum_{i,j,k\geq 1}x_{ijk}\,a_{k}+\sum_{r=0}^{R_{A}-1}2^{r}s^{A}_{r}-A_{\mathrm{plate}}\Bigr)^{2},
\end{equation}
with $R_{A}=\lceil\log_{2}(A_{\mathrm{plate}}+1)\rceil$ ($=6$ here), written in plate units so that any decorated area up to $A_{\mathrm{plate}}$ is matched exactly and any excess is penalised quadratically.

\textbf{Pairwise geometric infeasibility.} Two pieces in the below relation whose heights sum beyond the plate, or in the left-of relation with widths beyond the plate, necessarily overflow after longest-path recovery, yet incur no penalty so far. They are suppressed directly:
\begin{equation}\label{eq:hpair}
\begin{split}
H_{\mathrm{pair}}={}&\sum_{(a,b)\in\mathcal{B}}\;\sum_{\substack{k_{a},k_{b}\geq 1\\ \tilde h_{k_{a}}+\tilde h_{k_{b}}>H}}x_{ak_{a}}\,x_{bk_{b}}\\
&+\sum_{(a,b)\in\mathcal{L}}\;\sum_{\substack{k_{a},k_{b}\geq 1\\ \tilde w_{k_{a}}+\tilde w_{k_{b}}>W}}x_{ak_{a}}\,x_{bk_{b}}.
\end{split}
\end{equation}
This is by construction a pairwise approximation: chains of three or more pieces where no single pair overflows but the cumulative sum does are not captured. Section \ref{sec:feasibility} proves that no bounded-degree completion exists. 

Neither $H_{\mathrm{area}}$ nor $H_{\mathrm{pair}}$ belongs to the core representation: both are structural reinforcements layered on top of it, included because removing them measurably degrades the feasibility of the decoded samples, an effect isolated in the ablation study of Section \ref{sec:results}.

\textbf{Objective and total Hamiltonian.} The objective rewards decorated area, $H_{\mathrm{obj}}=\sum_{i,j,k\geq 1}x_{ijk}\,a_{k}$, and the total Hamiltonian is
\begin{equation}\label{eq:htotal}
\begin{split}
H_{\mathrm{total}} &= \alpha H_{\mathrm{perm}}+\beta H_{\mathrm{cons}}+\gamma H_{\mathrm{path}}+\delta\bigl(H^{W}_{\mathrm{bnd}}+H^{H}_{\mathrm{bnd}}\bigr) \\
&\quad +\delta_{a}H_{\mathrm{area}}+\lambda H_{\mathrm{pair}}-\mu H_{\mathrm{obj}},
\end{split}
\end{equation}
where every penalty is non-negative and vanishes exactly when its constraint is satisfied. The seven multipliers are calibrated automatically by the TPE protocol of Section \ref{sec:setup}. The selected values are $\alpha=39.46$, $\beta=6.98$, $\gamma=16.83$, $\delta=4.13$, $\delta_{a}=0.0582$, $\lambda=25.84$, $\mu=0.0287$.

\begin{table}[!t]
\centering\footnotesize
\caption{Quadratic couplers contributed by each penalty block of the sequence-pair Hamiltonian on the reference instance ($N=7$, $P=11$), before deduplication. Blocks share couplers: every $H_{\mathrm{path}}$ coupler also appears in $H_{\mathrm{bnd}}$, every $H_{\mathrm{pair}}$ coupler in $H_{\mathrm{area}}$, and $H_{\mathrm{perm}}$ and $H_{\mathrm{area}}$ share $38\,269$ pairs. The union holds $450\,925$ distinct couplers over $1\,680$ variables (density $0.32$, maximum degree $559$). The two global equality penalties expand to complete graphs.}
\label{tab:blocks}
\begin{tabular}{lrl}
\toprule
Block & Couplers & Structure \\
\midrule
$H_{\mathrm{perm}}$ & 45\,570 & row and column one-hot cliques \\
$H_{\mathrm{cons}}$ & 1\,078 & $y$--$x$ and $z$--$x$ implication pairs \\
$H_{\mathrm{path}}$ & 106\,722 & relation-conflicting chain pairs \\
$H_{\mathrm{bnd}}^{W}+H_{\mathrm{bnd}}^{H}$ & 294\,306 & two equality cliques $K_{543}$ \\
$H_{\mathrm{area}}$ & 148\,240 & one equality clique $K_{545}$ \\
$H_{\mathrm{pair}}$ & 32\,193 & per-pair infeasible placements \\
\midrule
Union (distinct) & 450\,925 & $+\,1\,680$ linear terms \\
\bottomrule
\end{tabular}
\end{table}

\subsection{Size Comparison on the Reference Instance}
\label{sec:sizes}
Table \ref{tab:blocks} breaks the sequence-pair coupler count down by block. The decisive observation is structural: the two boundary equalities and the area budget are global constraints, so their squared penalties couple every pair of participating variables, expanding to complete graphs $K_{543}$ (each boundary: $539$ chain bits plus $4$ slack bits) and $K_{545}$ ($539$ placement bits plus $6$ slack bits). Together with $H_{\mathrm{path}}$, they account for over 95 percent of the $450\,925$ distinct couplers, against $32\,193$ for the pairwise containment penalty $H_{\mathrm{pair}}$, which adds no variables (7.1 percent of the couplers). On the reference instance, the sequence-pair model therefore requires 9.2 times the variables and 64 times the couplers of the set-packing model, resulting in a maximum degree of 559 versus 140. Section \ref{sec:results} turns these counts into closed-form scaling laws and into a provable minor-embedding separation.

\section{Ground-State Feasibility Analysis}
\label{sec:feasibility}

\subsection{Exactness of the Coordinate-Based Model}
\begin{proposition}\label{prop:exact}
Let $H_{\mathrm{SP}}$ be the Hamiltonian \eqref{eq:hsp} with $\lambda>\mu\,a_{\max}$ and $0\leq\varepsilon<\mu/n_{\max}$. Then every ground state of $H_{\mathrm{SP}}$ is a geometrically feasible packing of maximum area $A^{\star}$, and the ground states are exactly the maximum-area packings of minimum cardinality.
\end{proposition}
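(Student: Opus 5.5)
The plan has three steps: show that ground states contain no conflicts, so that every one is a feasible packing, and then compare the energies of feasible packings directly.

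\emph{Step 1 (conflict removal).} Take any configuration $x$ with at least one active conflicting pair. Pick an active placement $p$ that lies in some active conflict, and switch it off. Dropping $x_p$ changes the energy by $-(\varepsilon-\mu a_p)-\lambda\,c_p$, where $c_p\geq 1$ is the number of active placements that conflict with $p$. Since $\varepsilon\geq 0$, $\mu a_p\leq\mu a_{\max}<\lambda$ and $c_p\geq 1$, this change is at most $\mu a_{\max}-\lambda<0$. So every configuration with a conflict can be strictly improved, and no ground state contains a conflict. Containment holds automatically because every placement in $\mathcal{P}$ lies inside the plate. Hence every ground state is a geometrically feasible packing.

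\emph{Step 2 (energy of feasible packings).} On conflict-free configurations the penalty term vanishes, so a packing with $n$ pieces and total area $A$ has energy $E=n\varepsilon-\mu A$. Because of Lemma \ref{lem:gcd}, or simply because the dimensions are integers, all areas are integers. If $A<A'$, then $A'-A\geq 1$. The hardest point is to bound the cardinality tie-break uniformly, and I would do it as follows. Any feasible packing has $n\leq n_{\max}$ pieces, since its pieces are disjoint subsets of the plate each of area at least $a_{\min}$. Therefore
\[
n\varepsilon-n'\varepsilon\leq n_{\max}\varepsilon<\mu\leq\mu(A'-A).
\]
This gives $E'<E$, so the packing of larger area always has strictly lower energy, whatever the two piece counts are. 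It follows that the minimum energy over feasible packings is attained only at area $A^{\star}$. One subtlety needs checking: $n_{\max}$ must bound the count of every feasible packing. That holds because disjoint pieces have total area at most $A_{\mathrm{plate}}$.

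\emph{Step 3 (characterization).} Among packings of area $A^{\star}$, the energy $n\varepsilon-\mu A^{\star}$ is increasing in $n$ when $\varepsilon>0$. So the ground states are exactly the maximum-area packings of minimum cardinality. When $\varepsilon=0$ all maximum-area packings are degenerate, and I would note this case in the statement's interpretation. Finally, Step 1 shows that every infeasible configuration has strictly higher energy than some feasible one, so the set of global minimizers coincides with the set of feasible minimizers. This completes the argument.
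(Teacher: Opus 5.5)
Your proof is correct and follows the paper's argument step for step: you deactivate a single conflicted placement to rule out conflicts, use integrality of areas together with $\varepsilon n_{\max}<\mu$ to force area $A^{\star}$, and let the tie-break $\varepsilon n$ decide among area-$A^{\star}$ packings; your caveat that the ``exactly minimum cardinality'' clause needs $\varepsilon>0$ is right, and the paper's own phrase ``with equality exactly when $n=n^{\star}$'' glosses over that case. One slip to fix: for $E'<E$ you must bound $(n'-n)\varepsilon$, not $(n-n')\varepsilon$, by $n_{\max}\varepsilon$, which holds equally well since $0\leq n,n'\leq n_{\max}$.
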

\noindent\textit{Proof:} Suppose a configuration activates two conflicting placements and let $p$ be one of them, involved in $c_{p}\geq 1$ active conflicts. Deactivating $p$ changes the energy by $(\mu a_{p}-\varepsilon)-\lambda c_{p}\leq\mu a_{\max}-\lambda<0$, so no configuration with a conflict is a ground state. Conflict-free configurations are packings, with energy $\varepsilon n-\mu A$ for cardinality $n$ and area $A$. Let $n^{\star}$ be the minimum cardinality among packings of area $A^{\star}$. For any packing with $A<A^{\star}$, integrality of areas gives $(\varepsilon n^{\star}-\mu A^{\star})-(\varepsilon n-\mu A)\leq \varepsilon\,n_{\max}-\mu<0$; for $A=A^{\star}$ the difference is $\varepsilon(n^{\star}-n)\leq 0$, with equality exactly when $n=n^{\star}$. \hfill$\blacksquare$

\begin{remark}\label{rem:eps}
The tuned values of Section \ref{sec:setup} are $\mu=1$, $\lambda=36.18>\mu a_{\max}=12$ and $\varepsilon=0.360$, which exceeds the universal bound $\mu/n_{\max}\approx 0.091$. Exactness nevertheless holds on the reference instance for any $\varepsilon<\mu$: no two catalogue areas sum beyond $24$, so every packing of area $34$ uses at least three pieces while the optimum uses $n^{\star}=4$, and a case check over the attainable (area, cardinality) pairs gives $\varepsilon(n^{\star}-n)<\mu(A^{\star}-A)$ whenever $A<A^{\star}$. The bound $\varepsilon<\mu/n_{\max}$ remains the safe instance-independent choice, and the SA, SQA and hybrid runs of Section \ref{sec:results} all confirm a feasible optimal ground state (the hybrid energy $-33.559=-\mu A^{\star}+\varepsilon n^{\star}$ matches the prediction to machine precision).
\end{remark}

\subsection{An Impossibility Result for Coordinate-Free Encodings}
\begin{definition}\label{def:faithful}
Consider a coordinate-free formulation whose binary configurations assign to each of $N$ slots a catalogue entry or the empty marker together with pairwise sequence-pair relations, decoded by longest paths, and write $A(\mathbf{x})$ for the decorated area, the total area of the placed entries. An energy $f=\mathcal{P}-\mu A$ with $\mu>0$ is an \emph{exact penalty encoding of degree $d$} if $\mathcal{P}$ is a pseudo-Boolean polynomial of degree at most $d$ in the configuration bits and $\mathcal{P}(\mathbf{x})=0$ for every configuration $\mathbf{x}$ whose decoded layout satisfies containment.
\end{definition}
The definition captures the standard Lagrangian design, in which each penalty term is non-negative and vanishes exactly on satisfied constraints, so that the energy ranks feasible layouts purely by area; the Hamiltonian \eqref{eq:htotal} is of this form, since $H_{\mathrm{pair}}$ only fires on pairs that are infeasible on their own.

\begin{theorem}\label{thm:impossible}
For every $d\geq 1$ there is an instance of the fixed-plate 2D-CSP on which every exact penalty encoding of degree at most $d$ has all of its minimum-energy configurations geometrically infeasible.
\end{theorem}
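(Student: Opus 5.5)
The plan is to build an explicit instance together with one explicit infeasible configuration $\mathbf{x}^{\ast}$. The instance will make $\mathbf{x}^{\ast}$ strictly better in area than every feasible layout. I will then show that any exact penalty encoding of degree $\le d$ is forced to give $\mathcal{P}(\mathbf{x}^{\ast})=0$.

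\textbf{The instance.} Take a single square type $1\times 1$ (so rotation is vacuous, $P=1$) on a $d\times 1$ plate, and assume the slot budget satisfies $N\geq d+1$. Every feasible layout has area at most $A^{\star}=d$. Hence every feasible configuration $\mathbf{x}$ has energy $f(\mathbf{x})=0-\mu A(\mathbf{x})\geq -\mu d$.

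\textbf{The infeasible configuration.} Let $\mathbf{x}^{\ast}$ place $d+1$ unit squares in $d+1$ slots, pairwise in the left-of relation (for the oblique grid, put them on the diagonal $\sigma=\mathrm{id}$). Put the empty marker in the remaining slots, and set the remaining bits (chain, slack, and so on) to whatever values the encoding uses for a valid configuration. Longest-path decoding places the squares side by side, so the layout has width $d+1>W$ and violates containment, while $A(\mathbf{x}^{\ast})=d+1$. If I can show $\mathcal{P}(\mathbf{x}^{\ast})=0$, then $f(\mathbf{x}^{\ast})=-\mu(d+1)<-\mu d\leq f(\mathbf{x})$ for every feasible $\mathbf{x}$. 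Consequently no minimum-energy configuration is feasible; the minimum exists since the configuration space is finite.

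\textbf{Vanishing of the penalty.} For each of the $d+1$ occupied slots $s$, fix two bit blocks: the block of $\mathbf{x}^{\ast}$ at $s$, and the block encoding an empty slot at $s$. Introduce a parameter $t_{s}\in\{0,1\}$, and let $\mathbf{x}(t)$ take the first block when $t_{s}=1$ and the second when $t_{s}=0$. Every configuration bit is then either constant, $t_{s}$, or $1-t_{s}$, so each bit is affine in a single $t_{s}$. If some auxiliary bit depends on several slots, I would let it follow the encoding's convention for the subconfiguration; this is the point to check against the encoding at hand.

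Substituting, $g(t)=\mathcal{P}(\mathbf{x}(t))$ is a multilinear polynomial of degree at most $d$ in $t\in\{0,1\}^{d+1}$. For every $t$ of Hamming weight $\leq d$, the configuration $\mathbf{x}(t)$ encodes at most $d$ unit squares in a left-of chain, with empty slots of zero dimension elsewhere. Its decoded layout therefore has width at most $d$ and height $1$, so it satisfies containment, and $g(t)=0$ by Definition~\ref{def:faithful}.

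A multilinear polynomial $g=\sum_{|S|\leq d}c_{S}\prod_{s\in S}t_{s}$ that vanishes on all points of weight $\leq d$ is identically zero. To see this, use Möbius inversion $c_{S}=\sum_{T\subseteq S}(-1)^{|S\setminus T|}g(\mathbf{1}_{T})$: each $|S|\leq d$ involves only points of weight $\leq d$, so every $c_{S}=0$. Hence $\mathcal{P}(\mathbf{x}^{\ast})=g(\mathbf{1})=0$, which completes the argument.

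\textbf{Main obstacle.} The delicate step is the slot-wise substitution. I must check that turning slots off one at a time always yields configurations that the encoding recognizes as valid and decodes to the obvious sub-layout. Concretely, the empty marker must have zero dimensions, and the relations among the remaining slots must be unchanged. This is also where the bound on the degree in the configuration bits must be converted into a bound on the degree in $t$. If an encoding spreads one slot's information over bits that are not cleanly blockwise, I would first re-express the configuration through the slot-to-block map above and argue that this map is affine per bit. That reduction is what makes the degree bound transfer.
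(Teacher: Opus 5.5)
Your proposal is correct and follows essentially the paper's proof. The paper's instance is a $1\times H$ plate with $1\times h$ pieces ($dh\le H<(d+1)h$) stacked in a below chain, and your $d\times 1$ plate with unit squares in a left-of chain is its rotated special case $h=1$. Both arguments then use the same slot-wise affine substitution $t_s$ and the same fact that a multilinear polynomial of degree $\le d$ vanishing off the all-ones vertex is identically zero; your M\"obius-inversion step just makes this explicit. The auxiliary-bit worry you flag does not arise under Definition~\ref{def:faithful}: exactness is required for every configuration whose decoded layout is contained, so you may hold all chain and slack bits constant, exactly as the paper does.
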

\noindent\textit{Proof:} Fix $d$ and take the instance with plate $1\times H$, a single catalogue type of dimensions $1\times h$ with $d\,h\leq H<(d+1)\,h$, and slot budget $N\geq d+1$. Fix a configuration template in which slots $1,\dots,d+1$ form a chain under the below relation and every remaining slot carries the empty marker, and for $S\subseteq\{1,\dots,d+1\}$ let $\mathbf{x}_{S}$ be the configuration placing the piece exactly in the slots of $S$. Toggling slot $s$ between the piece and the empty marker is affine in a single bit $t_{s}$ (its one-hot block reads $x_{s,\mathrm{piece}}=t_{s}$ and $x_{s,0}=1-t_{s}$, all other bits constant), so $g(t):=\mathcal{P}(\mathbf{x}_{S(t)})$ is a multilinear polynomial of degree at most $d$ on $\{0,1\}^{d+1}$. If some coordinate of $t$ vanishes, at most $d$ pieces are placed. Their longest-path ordinates stack to at most $d\,h\leq H$, the layout is feasible, and exactness gives $g(t)=0$. A multilinear function on $\{0,1\}^{d+1}$ vanishing at every vertex except possibly the all-ones vertex is a scalar multiple of $t_{1}t_{2}\cdots t_{d+1}$, of degree $d+1$. Since $\deg g\leq d$, $g$ vanishes identically, and in particular $\mathcal{P}(\mathbf{x}_{\{1,\dots,d+1\}})=0$. That configuration stacks $d+1$ pieces of total height $(d+1)h>H$, hence is infeasible, with energy $-\mu(d+1)h$. Because the plate has unit width, any two placed pieces of a feasible layout must be in the below relation, so a feasible configuration carries at most $\lfloor H/h\rfloor=d$ pieces and has energy at least $-\mu\,d\,h>-\mu(d+1)h$. The minimum of $f$ therefore lies strictly below every feasible configuration, and no minimizer is feasible. \hfill$\blacksquare$

\begin{example}\label{ex:d2}
For $d=2$, the quadratic case of QUBO hardware, take the plate $1\times 5$ and three pieces $1\times 2$: every pair stacks to height $4\leq 5$, the triple to $6>5$. Any quadratic penalty that vanishes on feasible layouts assigns the overflowing triple zero penalty as well, and the triple, carrying the largest area reward, becomes the ground state.
\end{example}

\begin{remark}\label{rem:soft}
Dropping exactness can rescue individual instances. On Example \ref{ex:d2}, a uniform inexact penalty $p$ on every below-related pair of placed pieces makes the feasible two-piece stack the unique ground state precisely when $p\in(\mu a/2,\,\mu a)$ with $a$ the piece area. Two costs follow. First, the admissible window depends on the piece areas and chain lengths of the instance, so the weights must be retuned per instance, which is what our TPE protocol does in practice. Second, faithfulness is lost: the optimal feasible layout is itself penalised, its energy no longer equals $-\mu A$, and the energy ranking of feasible layouts is distorted. This is the mechanism behind the weak energy-area correlation measured for the sequence-pair model in Section \ref{sec:results} ($-0.49$, against $-0.99$ for the exact set-packing model). The implemented $H_{\mathrm{pair}}$ deliberately avoids this trade by penalising only pairs that are infeasible on their own, so the formulation \eqref{eq:htotal} remains exact in the sense of Definition \ref{def:faithful} and Theorem \ref{thm:impossible} applies to it in full.
\end{remark}

\subsection{An Explicit Infeasible Ground State of the Implemented Hamiltonian}
\begin{proposition}\label{prop:certificate}
Consider the Hamiltonian \eqref{eq:htotal} on the reference instance with any multipliers satisfying $\mu>0$ and $\delta_{a}>\mu$ (the tuned values give $\delta_{a}=0.0582>\mu=0.0287$). Its minimum energy equals $-\mu A_{\mathrm{plate}}$ up to the constants of the squared penalties, and the set of minimizers contains geometrically infeasible configurations. One of them places, under the identity permutation, the entries $4\times 3$, $3\times 1$, $2\times 5$ and $2\times 5$ in the first four slots, leaves the remaining slots empty, sets all envelope-chain bits to zero, fills the boundary slacks to $\hat W_{P}$ and $\hat H_{P}$, and sets the area slack to zero.
\end{proposition}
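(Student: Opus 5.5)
The proof has two parts. First I establish a lower bound on $H_{\mathrm{total}}$ (with squared-penalty constants stripped). Then I exhibit the stated configuration, show it attains the bound, and show it is geometrically infeasible.

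\emph{Lower bound.} Every penalty block is non-negative, so $H_{\mathrm{total}}\geq \delta_{a}H_{\mathrm{area}}-\mu H_{\mathrm{obj}}$. Write $D=\sum x_{ijk}a_{k}$ for the decorated area and $\sigma\geq 0$ for the slack value, so that $H_{\mathrm{area}}=(D+\sigma-A_{\mathrm{plate}})^{2}$ and $H_{\mathrm{obj}}=D$. I minimize over integers $D\geq 0$ and $\sigma\in\{0,\dots,2^{R_{A}}-1\}$.
\begin{itemize}
\item For $D\leq A_{\mathrm{plate}}$, I take $\sigma=A_{\mathrm{plate}}-D$, which fits in the $R_{A}$-bit range. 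The residual is then zero and the energy is $-\mu D\geq-\mu A_{\mathrm{plate}}$.
\item For $D=A_{\mathrm{plate}}+e$ with integer $e\geq 1$, the best choice is $\sigma=0$, and the energy is at least $\delta_{a}e^{2}-\mu(A_{\mathrm{plate}}+e)$. This exceeds $-\mu A_{\mathrm{plate}}$ because $\delta_{a}e^{2}\geq\delta_{a}e>\mu e$.
\end{itemize}
This is the only place the hypothesis $\delta_{a}>\mu$ is used. Areas are integral, so $e\geq1$ and the quadratic dominates. Hence the infimum is $-\mu A_{\mathrm{plate}}$ up to the constants of the squared penalties.

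\emph{Attainment.} I check that each block of \eqref{eq:htotal} vanishes on the stated configuration.
\begin{itemize}
\item \emph{Permutation.} The first four cells on the diagonal hold real pieces and the remaining three diagonal cells hold the empty marker $k=0$. Each row and column therefore has exactly one active bit, so $H_{\mathrm{perm}}=0$.
\item \emph{Consistency and path.} All $y$ and $z$ bits are zero, so $H_{\mathrm{cons}}=H_{\mathrm{path}}=0$.
\item \emph{Boundary.} With no chain bits set, $H^{W}_{\mathrm{bnd}}$ reduces to $(\Delta_{W}\sum 2^{r}s^{W}_{r}-\hat W_{P})^{2}$. Since $\hat W_{P}=14\Delta_{W}$ is representable in four bits, this vanishes. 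The same holds for $H$, where I verify that $\hat H_{P}/\Delta_{H}$ is an integer within range ($\Delta_{H}=\hat h_{\min}/2$ with $h_{\min}=1$, giving $10\Delta_{H}$).
\item \emph{Area.} The decorated area is $12+3+10+10=35=A_{\mathrm{plate}}$ with slack zero, so $H_{\mathrm{area}}=0$ and $-\mu H_{\mathrm{obj}}=-\mu A_{\mathrm{plate}}$.
\item \emph{Pairwise.} Under the identity permutation every pair of occupied cells $a,b$ with $i_{a}<i_{b}$ has $j_{a}<j_{b}$. All pairs are therefore left-of and none is below, so only width sums matter. The widths are $4,3,2,2$ if the $4\times3$ entry is read as width $4$, height $3$ and the $3\times1$ as width $3$. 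Here $4+3=7\leq W$, while $4+2$, $3+2$ and $2+2$ are all at most $7$. So no pair fires and $H_{\mathrm{pair}}=0$.
\end{itemize}
The configuration thus attains the lower bound and is a minimizer.

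\emph{Infeasibility.} The four pieces form a single left-of chain, so the longest-path decoder places them side by side. The total width is $4+3+2+2=11>7$, so the last piece overflows the plate.

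The step I expect to be the main obstacle is pinning down the orientation conventions of the catalogue entries. I must ensure that the named entries are oriented as above, so that every pair clears the pairwise test while the four-chain still overflows. I also must confirm that the boundary-slack quanta make $\hat H_{P}$ exactly representable. If the $4\times3$ entry were the rotated orientation I would recheck the width sums, but the chain total stays above $7$ in either case.
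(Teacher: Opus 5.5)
Your proposal is correct and follows essentially the same route as the paper: the same two-case lower bound (decorated area at most $A_{\mathrm{plate}}$, versus an integer overshoot $e\geq 1$ where $\delta_{a}e^{2}-\mu e>0$), the same block-by-block check with pairwise width sums $7,6,6,5,5,4$ and boundary slacks $14\Delta_{W}$, $10\Delta_{H}$, and the same longest-path overflow ($4+3+2+2=11>7$). The paper additionally observes that the feasible optimal packing of the same multiset also attains $-35\mu$, so the ground manifold is degenerate, but the statement as posed does not require this.
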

\noindent\textit{Proof:} Lower bound: every penalty block is non-negative. If the decorated area is $A_{\mathrm{plate}}+o$ with $o\geq 1$, then $H_{\mathrm{area}}\geq\delta_{a}o^{2}$ irrespective of the slack, so $f\geq\delta_{a}o^{2}-\mu(A_{\mathrm{plate}}+o)\geq-\mu A_{\mathrm{plate}}+(\delta_{a}-\mu)o>-\mu A_{\mathrm{plate}}$. If the decorated area is at most $A_{\mathrm{plate}}$, then $f\geq-\mu A_{\mathrm{plate}}$ directly. Attainment: in the displayed configuration, $H_{\mathrm{perm}}$ vanishes (one entry per row and column), $H_{\mathrm{cons}}$ and $H_{\mathrm{path}}$ vanish (no chain bits set), both boundary residues vanish because $\hat W_{P}$ and $\hat H_{P}$ are integer multiples ($14$ and $10$) of the slack quanta $\Delta_{W}=\Delta_{H}=1/120$ within the 4-bit ranges, $H_{\mathrm{area}}$ vanishes because the decorated area is $12+3+10+10=35=A_{\mathrm{plate}}$ with zero slack, and $H_{\mathrm{pair}}$ vanishes because the pairwise width sums along the left-of chain are $7,6,6,5,5,4$, none exceeding $W=7$ (empty slots contribute zero width). The energy is therefore exactly $-35\mu$. The longest-path abscissae of the four pieces are $0$, $4$, $7$ and $9$, so the third and fourth pieces end at $9$ and $11$, beyond $W=7$: the configuration is infeasible. Finally, the geometrically optimal packing of the same multiset (Section \ref{sec:problem}) with the same auxiliary choices also attains $-35\mu$, so the ground manifold is degenerate between feasible and infeasible layouts. \hfill$\blacksquare$

\begin{figure}[!t]
\centering
\includegraphics[width=\columnwidth]{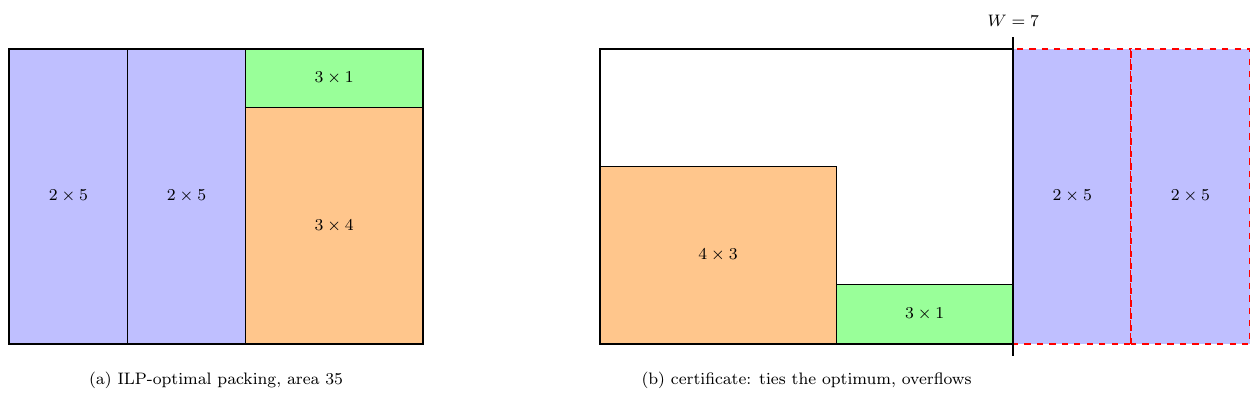}
\caption{The degenerate ground manifold of Proposition \ref{prop:certificate} on the reference instance. Left: an ILP-optimal packing, area 35 with four pieces. Right: the certificate, the same multiset arranged as a single left-of chain with empty envelope chains and saturated slacks; every penalty term vanishes and the energy ties the optimum exactly, but the longest-path abscissae $0, 4, 7, 9$ push the last two pieces (outlined) past the plate edge at $W=7$.}
\label{fig:certificate}
\end{figure}

Proposition \ref{prop:certificate} is Theorem \ref{thm:impossible} made concrete on the tuned model: the certificate is an overflowing left-of chain whose every pair fits. It predicts exactly the empirical behaviour of Section \ref{sec:results}: in every simulated-annealing seed the lowest-energy sample attains the degenerate ground energy and is geometrically infeasible (the infeasible side of the manifold and its low-energy neighbourhood is combinatorially far richer than the feasible side), and the hybrid solver returns the phenomenon in its purest form, a single configuration tying the certificate energy to within $3\cdot 10^{-13}$ whose decoded layout overflows the plate. Fig. \ref{fig:certificate} draws the certificate next to the feasible optimum it ties. The correct reading of the coordinate-free model is therefore sample-and-filter: by completeness of the representation the decoder never produces overlaps, only containment can fail, so filtering decoded samples costs $O(N^{2})$ each and is adequate. The model is a legitimate heuristic sampler, not an exact encoding.

\subsection{The Cost of Exact Repair}
To forbid every overflowing chain of $\ell$ pieces directly, one needs penalty monomials of degree $\ell$ supported on the cell tuples in chain relation and their type assignments, of order $O(N^{2\ell}P^{\ell})$ terms. On the reference instance $\ell=3$ already gives about $N^{6}P^{3}\approx 1.6\cdot 10^{8}$ monomials, and mapping each degree-$\ell$ term to quadratic hardware requires auxiliary variables through standard quadratization \cite{boros2002}. Exact repair is therefore out of reach for near-term annealers, which is Theorem \ref{thm:impossible} read constructively: the missing information is the longest-path accumulation, and it cannot be compressed into bounded-degree interactions.

\section{Experimental Setup}
\label{sec:setup}

\subsection{Instance and Software Stack}
Except where noted, all experiments use the reference instance of Section \ref{sec:problem}, and every number reported in Section \ref{sec:results} was produced in a single consolidated session on one workstation, so that model sizes, samples and wall-clock times are mutually consistent. The software stack is Python 3.12 with dimod 0.12.21 and neal 0.6.0 for classical simulated annealing, OpenJij for simulated quantum annealing, minorminer 0.2.21 and dwave-system 1.34.0 for hardware access (a DWaveSampler attached to the Advantage\_system4 QPU, 5\,627 active qubits, plus the LeapHybridBQMSampler and LeapHybridCQMSampler), PuLP 3.3.2 with the CBC solver \cite{cbc2005} for the exact baseline, and Optuna 4.8.0 \cite{akiba2019} for multiplier calibration.

\subsection{Multiplier Calibration}
Both formulations are calibrated automatically with a Tree-structured Parzen Estimator \cite{bergstra2011}. For the set-packing model, $\mu=1$ is fixed and the TPE explores $\lambda\in[1.5\,a_{\max},\,8\,a_{\max}]$ log-uniformly and $\varepsilon\in[0,0.5]$ over 40 trials, each scored on a 1\,500-read, 600-sweep SA run by the geometric feasibility rate plus the normalised area at minimum energy; the selected values are $\lambda=36.18$ and $\varepsilon=0.360$ (see Remark \ref{rem:eps}). For the sequence-pair model, the seven multipliers of \eqref{eq:htotal} are searched jointly over log-uniform ranges $\alpha\in[5,80]$, $\beta\in[1,30]$, $\gamma\in[2,50]$, $\delta\in[0.5,15]$, $\delta_{a}\in[0.005,0.5]$, $\lambda\in[3,60]$ and $\mu\in[0.005,0.15]$, each trial scored on a 3\,000-read, 1\,000-sweep SA run by the normalised area at minimum energy plus a saturating bonus in the count of geometrically feasible samples. The configuration used throughout (quoted in Section \ref{sec:seqpair}) was selected by an earlier study under this protocol, and a fresh 60-trial study over the documented space, taking 4.9 h, reproduces its quality, as Section \ref{sec:results} reports.

\subsection{Protocols and Metrics}
The final SA evaluation uses eleven seeds: 42, 123, 7, 99, 314, 271, 1000, 31415, 8192, 1729 and 12345. The set-packing model is sampled at 4\,000 reads $\times$ 1\,500 sweeps, and the larger sequence-pair model at 8\,000 $\times$ 2\,000. The ablation runs four regimes of \eqref{eq:htotal}, namely the baseline without the two structural penalties, $+H_{\mathrm{area}}$ only, $+H_{\mathrm{pair}}$ only, and the full model. It uses fifteen seeds, the eleven above plus 11, 22, 33 and 44, at 5\,000 reads per seed and with every other multiplier fixed at its tuned value, and compares each regime against the baseline by a paired two-sided Wilcoxon signed-rank test. The sensitivity analysis perturbs each multiplier by $\pm 25$ percent in isolation at seed 42, with 6\,000 reads $\times$ 1\,500 sweeps.

Simulated quantum annealing uses the OpenJij SQASampler. The set-packing model runs at 1\,000 reads $\times$ 1\,500 sweeps. The sequence-pair model gets a reinforced schedule of 1\,000 reads $\times$ 8\,000 sweeps, with $\beta=50$, $\gamma=2$ and eight Trotter slices.

The hardware runs proceed in three stages: minor embedding, direct QPU sampling, and the hybrid solvers. Minor embedding uses minorminer against the working graph of the QPU, with a 120 s budget for the head-to-head embeddings and 180 s and three tries per plate for the capacity sweep. Direct QPU sampling then draws 1\,000 reads per setting on a fixed embedding, sweeping chain strengths $\{20, 40, 60, 80\}$ plus the uniform-torque-compensation value, at annealing times of 20, 100 and 200 $\mu$s. Finally, the hybrid BQM and CQM samplers run at their default time limits. The CQM reformulations carry the constraints of Section \ref{sec:qubo} natively, with no penalty weights and no slack bits: one placement per cell for set-packing, and the one-hot, implication, chain-path, pairwise, boundary and area constraints for sequence-pair, 1\,099 in total.

The exact baseline maximises $\sum_{p}a_{p}x_{p}$ over the 182 placement binaries, subject to at most one placement covering each of the 35 cells. It is solved by CBC with a proof of optimality, and timed as the median of five runs.

The multi-instance campaign of Section \ref{sec:results} runs six instances. Each is preprocessed by Lemma \ref{lem:gcd}, solved exactly by the same ILP, and then evaluated under both formulations with eleven seeds per model. The set-packing model uses the analytic multipliers of Proposition \ref{prop:exact} with no tuning at all ($\mu=1$, $\lambda=2a_{\max}$, $\varepsilon=0.9/n_{\max}$) at 3\,000 reads $\times$ 1\,200 sweeps. The sequence-pair model is recalibrated per instance, as Remark \ref{rem:soft} requires: 25 TPE trials at 2\,000 $\times$ 800, followed by the multi-seed run at 5\,000 $\times$ 1\,500, with the slot budget set to two more than the piece count of the ILP solution (minimum six).

A sample is structurally (QUBO-) feasible when the one-hot, implication and boundary-capacity constraints of Section \ref{sec:seqpair} hold. The set-packing model has no structural constraints, so every assignment is directly decodable. Decoded layouts come from the longest-path post-processor, and a sample is geometrically feasible when every decoded piece satisfies containment. Per seed we report the geometric feasibility rate, the best feasible area, the decoded area of the lowest-energy feasible sample together with the within-pool gap between the two, whether the overall lowest-energy sample is geometrically feasible, and the Pearson correlation between energy and decoded area across feasible samples. The ablation additionally tracks the overshoot rate, the fraction of structurally valid samples whose decorated area exceeds the plate.

\section{Results}
\label{sec:results}
 
\subsection{Simulated Annealing Head-to-Head}
Table \ref{tab:comparison} summarises the multi-seed comparison, and Fig. \ref{fig:scatter} shows the pooled energy-area scatters behind its correlation row. The set-packing model attains the proven optimum $A^{\star}=35$ in all eleven seeds, every returned sample is conflict-free, the minimum-energy sample is feasible in every seed, consistent with Proposition \ref{prop:exact}, the pooled energy-area correlation is $-0.993$, and the pooled runs contain 61 distinct optimal compositions. The sequence-pair model produces structurally valid samples at 100 percent, but only $10.7\pm 0.3$ percent of them decode inside the plate. The average best feasible area is $33.91$, with four of eleven seeds reaching 35 (seeds 42, 123, 8192 and 12345, contributing five distinct optimal compositions), and the per-seed correlation ranges between $-0.43$ and $-0.56$. In every seed the lowest-energy sample attains the degenerate ground energy, matching the certificate value of Proposition \ref{prop:certificate} to within $8\cdot 10^{-6}$ (single-precision accumulation in the sampler), and decodes outside the plate, exactly as the proposition predicts. The within-pool gap is zero in every seed for both models: among the geometrically feasible samples the lowest-energy one already carries the best feasible area, so the failure of the coordinate-free model is not the ranking among feasible layouts but the mass of infeasible states tied with and below the feasible optimum.
 
\begin{table}[!t]
\centering\footnotesize
\caption{Head-to-head comparison of the two QUBO formulations on the fixed-plate reference instance ($7\times 5$, area 35). Simulated-annealing metrics are average $\pm$ standard deviation over the eleven seeds of Section \ref{sec:setup}. The set-packing model has no structural constraints, so every assignment is decodable. The headline row is the geometric feasibility of the minimum-energy sample. All values were produced in the single consolidated session of Section \ref{sec:setup}. Best feasible area is the per-seed average, bounded above by $A^{\star}=35$.}
\label{tab:comparison}
\begin{tabular}{lrr}
\toprule
 & Sequence-pair & Set-packing \\
\midrule
Binary variables & 1\,680 & 182 \\
Quadratic couplers & 450\,925 & 7\,005 \\
Maximum degree & 559 & 140 \\
Interaction-graph density & 0.32 & 0.43 \\
Min.-energy sample feasible & 0/11 seeds & 11/11 seeds \\
Structurally valid samples (\%) & $100.0\pm 0.0$ & -- \\
Geometric feasibility (\%) & $10.7\pm 0.3$ & $100.0\pm 0.0$ \\
Best feasible area & $33.91$ & $35.00$ \\
Seeds reaching $A^{\star}$ & 4/11 & 11/11 \\
corr(energy, area) & $-0.488$ & $-0.993$ \\
\midrule
ILP optimum $A^{\star}$ (CBC) & 35 & 35 \\
\bottomrule
\end{tabular}
\end{table}
 
\begin{figure}[!t]
\centering
\includegraphics[width=0.49\columnwidth]{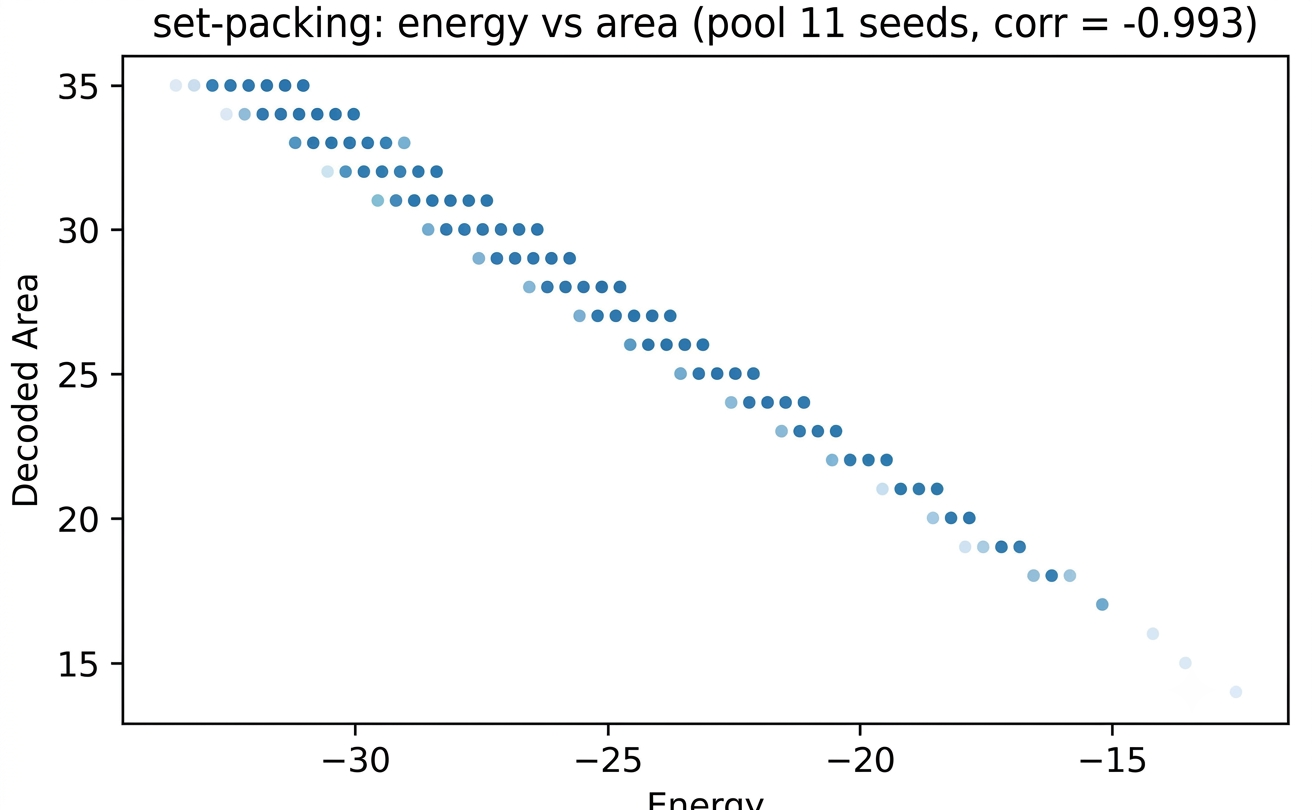}\hfill
\includegraphics[width=0.49\columnwidth]{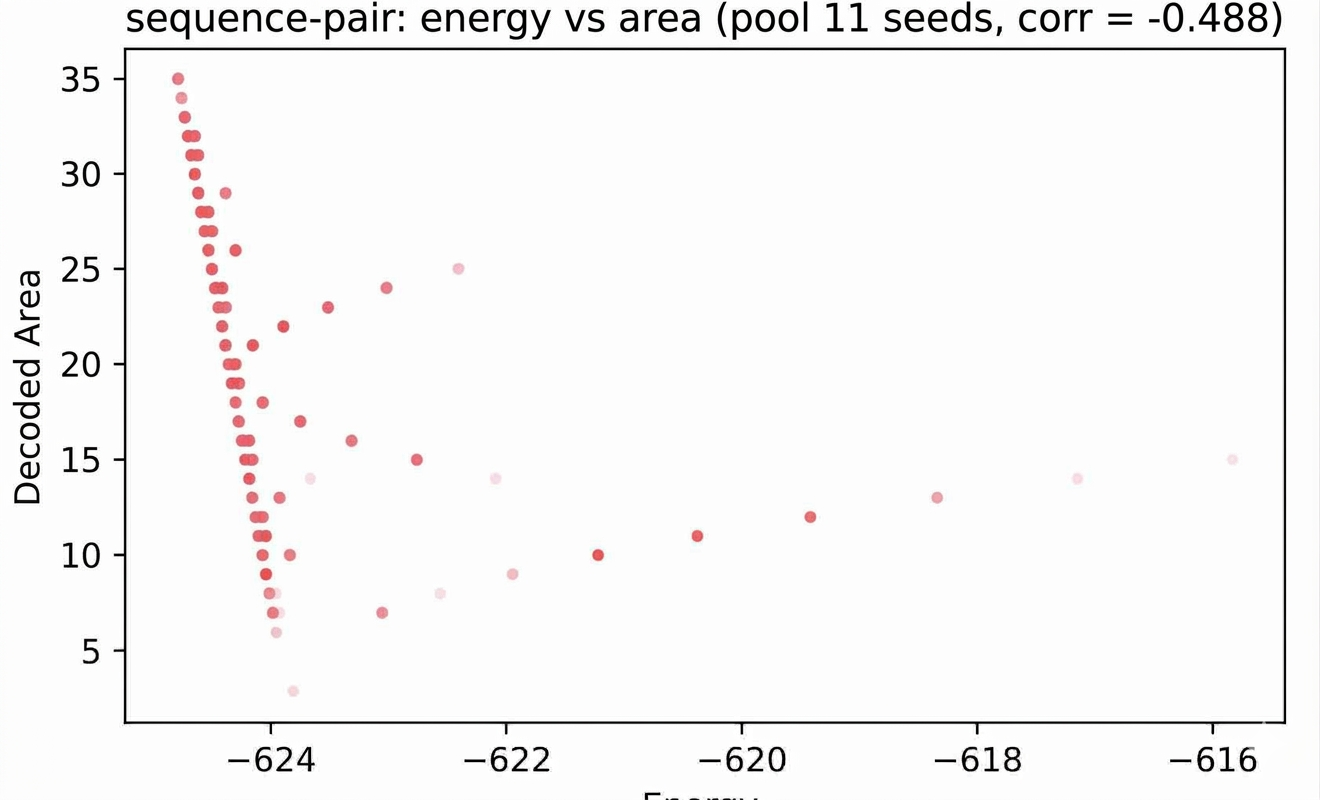}
\caption{Energy against decoded area over the geometrically feasible samples of the pooled eleven-seed SA runs. Left, set-packing: the diagonal staircase is $E=-\mu A+\varepsilon n$, and the small horizontal spread inside each area level is the cardinality tie-break $\varepsilon$ separating compositions of equal area (Pearson correlation $-0.993$). Right, sequence-pair: the zero-penalty manifold compresses the whole feasible area range into a wall of width $\mu(A^{\star}-a_{\min})\approx 0.9$ energy units beside the penalty constants, while residual penalties from mismatched slack registers displace samples rightward by up to nine units regardless of their area. Energy is dominated by constraint bookkeeping rather than by the objective (correlation $-0.488$), the price of a calibration that must keep $\mu$ small against the penalty scales.}
\label{fig:scatter}
\end{figure}
 
\subsection{Ablation of the Structural Penalties}
Table \ref{tab:ablation} isolates the contribution of $H_{\mathrm{area}}$ and $H_{\mathrm{pair}}$ over fifteen seeds. Relative to the baseline, geometric feasibility and overshoot improve under every regime (paired Wilcoxon, $p\leq 6.5\cdot 10^{-4}$ throughout, reaching the smallest attainable two-sided value $6.1\cdot 10^{-5}$ at $n=15$ for $+H_{\mathrm{area}}$). The average best feasible area improves significantly for $+H_{\mathrm{pair}}$ ($p=9.6\cdot 10^{-4}$) and for the full model ($p=1.4\cdot 10^{-3}$) but not for $+H_{\mathrm{area}}$ alone ($p=0.18$). The pairwise penalty is the main driver of solution quality, the area budget the main suppressor of over-decoration, and the combination maximises feasibility (a twenty-nine-fold gain over the baseline) while reducing overshoot from 76 to 10 percent. The average best areas of $+H_{\mathrm{pair}}$ and the full model are statistically indistinguishable ($33.47\pm 1.09$ against $33.40\pm 0.71$). Two further observations. The area budget bounds over-decoration without eliminating orphan placements, since in all four regimes more than 98 percent of structurally valid samples carry placement bits outside both envelope chains. And the head-to-head run of Table \ref{tab:comparison}, under its own heavier budget, shows the full-model overshoot at $11.6\pm 0.5$ percent, consistent with the regime ordering measured here.
 
\begin{table}[!t]
\centering\footnotesize
\caption{Ablation of the area and pairwise-containment penalties in the sequence-pair formulation over fifteen seeds (the eleven of Section \ref{sec:setup} plus $\{11, 22, 33, 44\}$; 5\,000 reads per seed. Remaining multipliers fixed at their tuned values). Overshoot is the fraction of structurally valid samples whose decorated area exceeds the plate. Paired Wilcoxon against the baseline: geometric feasibility and overshoot $p=6.1\cdot 10^{-5}$ for $+H_{\mathrm{area}}$ and $p=6.5\cdot 10^{-4}$ for $+H_{\mathrm{pair}}$ and full. Best area $p=9.6\cdot 10^{-4}$ for $+H_{\mathrm{pair}}$, $p=1.4\cdot 10^{-3}$ for full, $p=0.18$ for $+H_{\mathrm{area}}$.}
\label{tab:ablation}
\begin{tabular}{lrrr}
\toprule
Regime & Geom.\ feas.\ (\%) & Overshoot (\%) & Best area (avg) \\
\midrule
baseline & $0.62 \pm 0.12$ & $76.00 \pm 0.59$ & $30.07 \pm 1.65$ \\
$+H_{\mathrm{area}}$ & $2.03 \pm 0.26$ & $40.07 \pm 0.44$ & $30.67 \pm 1.70$ \\
$+H_{\mathrm{pair}}$ & $12.83 \pm 0.51$ & $22.42 \pm 0.56$ & $33.47 \pm 1.09$ \\
full & $17.81 \pm 0.33$ & $9.66 \pm 0.46$ & $33.40 \pm 0.71$ \\
\bottomrule
\end{tabular}
\end{table}
 
\subsection{Sensitivity and Calibration Stability}
Perturbing each of the seven multipliers by $\pm 25$ percent in isolation (seed 42, 6\,000 reads $\times$ 1\,500 sweeps) keeps the average best feasible area within $[32,35]$ around the base value of 33 and the geometric feasibility within $[9.9,12.4]$ percent around 10.7 percent, consistent with the multi-seed level of Table \ref{tab:comparison}; no multiplier sits on a knife edge. Calibration is stable in a second sense. The fresh 60-trial TPE study over the documented space of Section \ref{sec:setup} ends with three trials tied at the maximum attainable score, and re-evaluating each on five seeds at the same budget gives best feasible areas of $33.6\pm 1.2$, $32.6\pm 1.4$ and $32.8\pm 1.2$ with geometric feasibility between 7.1 and 7.5 percent and a geometrically infeasible minimum-energy sample in all fifteen runs, statistically indistinguishable from the configuration of Table \ref{tab:comparison}. The search lands reliably on multiplier sets of the same quality, and none of them escapes Proposition \ref{prop:certificate}.
 
\subsection{Hardware: Minor Embedding, QPU Sampling and Hybrid Solves}
The set-packing QUBO minor-embeds on the Advantage\_system4 working graph: 182 logical variables map to 3\,342 physical qubits with maximum chain length 31 and average 18.4 (minorminer is randomised. An earlier exploratory run produced 4\,185 qubits with chains up to 41). The sequence-pair QUBO does not embed: minorminer exhausts its budget without finding an embedding, and in earlier runs aborted at initialisation reporting the source graph as unreasonably large. This failure is structural, not heuristic, as the following bound shows.
 
\begin{proposition}\label{prop:embed}
Any minor embedding of the complete graph $K_{n}$ into a hardware graph of maximum degree $\Delta$ uses at least $n\,\lceil (n-3)/(\Delta-2)\rceil$ physical qubits.
\end{proposition}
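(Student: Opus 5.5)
The plan is a per-chain counting argument followed by summation over the $n$ disjoint chains. A minor embedding of $K_{n}$ assigns to each logical vertex $v$ a chain $C_{v}$: a connected set of physical qubits. The chains are pairwise disjoint, and for every pair $u\neq v$ at least one hardware edge joins $C_{u}$ to $C_{v}$. The total qubit count is $\sum_{v}|C_{v}|$. So it suffices to show that every chain satisfies $|C_{v}|\geq\lceil (n-3)/(\Delta-2)\rceil$. I assume $\Delta\geq 3$, as for any hardware graph on which the bound is meaningful (Pegasus has $\Delta=15$).

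First I bound, for a single chain $C$ with $c=|C|$ qubits, the number of hardware edges that leave $C$.
\begin{itemize}
\item The sum of degrees over $C$ is at most $c\Delta$.
\item Since $C$ is connected, it contains a spanning tree with $c-1$ internal edges. Each of these edges uses two of the degree slots counted above, and none of them leaves $C$.
\item Hence at most $c\Delta-2(c-1)=c(\Delta-2)+2$ edge endpoints in $C$ can point to qubits outside $C$.
\end{itemize}
Including any further internal edges beyond the spanning tree only lowers this count, so the bound holds for every connected chain. It also covers $c=1$, where it gives $\Delta$.

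Second, I use the requirement that $C_{v}$ be adjacent to each of the other $n-1$ chains. These chains are disjoint, so each needs its own outgoing edge from $C_{v}$, and distinct target chains require distinct edges. This gives $c(\Delta-2)+2\geq n-1$, that is, $c\geq (n-3)/(\Delta-2)$. Since $c$ is an integer, $c\geq\lceil (n-3)/(\Delta-2)\rceil$. Summing over the $n$ disjoint chains yields the claimed $n\lceil (n-3)/(\Delta-2)\rceil$.

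The main obstacle is the degree-accounting step: one must correctly charge the internal spanning-tree edges. It is exactly the $-2(c-1)$ term that produces the $\Delta-2$ denominator and the $+2$ that becomes the $-3$ in the numerator. I would state that step carefully and note the edge cases $c=1$ and $n\leq 3$, where the bound is trivial.

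Applied afterwards to the cliques $K_{543}$ and $K_{545}$ of Table~\ref{tab:blocks} with the Pegasus degree, this gives the non-embeddability claim.
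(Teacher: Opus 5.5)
Your proposal is correct and follows essentially the paper's own argument. You cap the edges leaving each branch set at $\Delta L-2(L-1)$ via a spanning tree, require a distinct external edge to each of the other $n-1$ branch sets to get $(\Delta-2)L+2\ge n-1$, and then sum over the $n$ disjoint chains; the paper leaves the integer rounding, the assumption $\Delta\ge 3$, and the trivial small cases implicit, and you spell them out.
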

\noindent\textit{Proof:} A branch set on $L$ vertices has at most $\Delta L$ incident edge endpoints, spends at least $2(L-1)$ of them on its internal connectivity, and must reach each of the other $n-1$ branch sets through at least one distinct external edge, so $(\Delta-2)L+2\geq n-1$. \hfill$\blacksquare$
 
The area-budget penalty alone induces $K_{545}$ (Table \ref{tab:blocks}). With $\Delta=15$ on the Pegasus topology \cite{boothby2020}, Proposition \ref{prop:embed} demands at least $545\cdot 42=22\,890$ qubits, about four times the roughly 5\,600 of an Advantage chip, with $\Delta=20$ on Zephyr \cite{boothby2021} it demands $545\cdot 31=16\,895$ against roughly 4\,400 on Advantage2. For context, the largest native clique embeddings of these topologies are $K_{12M-10}$ ($K_{182}$ on Pegasus P16) and $K_{16m-8}$ on Zephyr $Z_{m}$ \cite{boothby2020,boothby2021}. The global equality penalties of the coordinate-free model are therefore incompatible with direct execution on any current or announced annealing topology, independently of tuning.
 
The coordinate-based model has its own, softer frontier. Sweeping the plate size at fixed catalogue with the same embedder (180 s, three tries per plate), the $7\times 5$ conflict graph embeds (3\,346 qubits, maximum chain 32), but every larger plate fails, starting already at $8\times 6$ with 287 variables and 13\,941 couplers. The arithmetic is unforgiving: at this conflict density the average chain length is about 18, so a 5\,627-qubit Advantage accommodates on the order of 300 logical variables, and the $8\times 6$ instance sits at that ceiling. The degree-based bound of Proposition \ref{prop:embed} applied per variable requires only about 2\,100 qubits for $8\times 6$, so this failure is one of heuristic embedding and chip capacity rather than provable impossibility, in contrast to the sequence-pair case, but the practical message stands, namely that the linear coupler growth of the set-packing model buys asymptotic headroom on future hardware, not immediate reach on current chips.
 
Direct QPU sampling fails across the entire calibration range, and the two failure modes are instructive. An initial exploratory sweep at chain strengths 0.5 to 4 (1\,000 reads each, on the earlier 4\,185-qubit embedding) broke 90 to 96 percent of chains; majority-vote repair then acts as a decimation that switches most placements off, which yields conflict-free but nearly empty layouts, with best areas of 20, 13, 10 and 18 and conflict-free rates up to 97.2 percent. The calibrated sweep on the fixed 3\,342-qubit embedding tests chain strengths 20, 40, 60 and 80 plus the uniform-torque-compensation value of 448.8, at annealing times of 20, 100 and 200 $\mu$s, again with 1\,000 reads per setting: chain-break fractions fall from 0.70 to 0.008 as the chains stiffen, yet not one of the 7\,000 returned samples is conflict-free, best feasible areas are zero throughout, and minimum energies are positive, between $+87$ and $+2\,013$ against the optimum at $-33.56$. The mechanism is precision rather than chains: with chain couplings at 80 to 449, hardware autoscaling compresses the problem couplers ($|J|\leq 36.2$) and especially the linear terms ($|h|\leq 11.6$) toward the low end of the analog range, and the returned states are dense with overlapping placements. The instance is therefore embeddable but not solvable by direct annealing at any tested setting. On this problem class the quantum-accessible route is hybrid.
 
The Leap hybrid BQM solver closes the loop on both sides of the comparison. On the set-packing model it returns the proven optimum: energy $-33.559=-\mu A^{\star}+\varepsilon n^{\star}$ with $n^{\star}=4$, matched to machine precision, a geometrically feasible sample of area 35, 3.00 s of charge time and 98 ms of QPU access time. On the sequence-pair model it returns the certificate phenomenon verbatim: a single structurally valid sample with decorated area exactly 35 whose energy ties the certificate value of Proposition \ref{prop:certificate} to within $3\cdot 10^{-13}$ and whose decoded layout overflows the plate, in 4.49 s of charge time, so the best geometrically feasible area extracted from the hybrid answer is zero. A solver that returns one lowest-energy sample cannot distinguish the feasible from the infeasible members of the degenerate ground manifold, and here it drew an infeasible one.
 
The constraint-native CQM solver sharpens the same point. The set-packing CQM (35 cover constraints, no penalty weights) is solved outright: 91 feasible samples, best area 35, in 5.3 s of hybrid run time. The sequence-pair CQM carries every expressible constraint of Section \ref{sec:seqpair} natively, 1\,099 in total, with no penalty tuning, no slack bits and no equality cliques in the quantum-visible part. The solver returns 82 constraint-feasible samples whose best objective value is the decorated area 35, and not one of the 82 decodes inside the plate. With the penalty method removed entirely, the failure survives intact, which is Theorem \ref{thm:impossible} observed in its cleanest form: what is missing is not calibration but expressiveness, because the longest-path containment constraint is simply not in the model.
 
\subsection{Simulated Quantum Annealing}
Under OpenJij SQA the set-packing model behaves like an easy sparse problem: all samples conflict-free, optimum 35 found, energy-area correlation $-0.978$, minimum-energy sample feasible, in 78 s of wall time at 1\,000 reads $\times$ 1\,500 sweeps. The sequence-pair model yields no structurally valid sample at all in 2\,438 s, even under the reinforced schedule of Section \ref{sec:setup}. We report this as a genuine negative result rather than tuning it away: hard one-hot permutation structure under a plain transverse field is a known weak spot of SQA without tailored moves, and it compounds the feasibility findings above.
 
\subsection{Wall-Clock Times}
Table \ref{tab:timing} collects the wall-clock times, all measured in the same consolidated session on one workstation. The exact ILP proves optimality in 7.4 ms (median of five runs), which fixes the scale of the comparison: on instances of this size nothing quantum or quantum-inspired competes on time, and the value of the exercise is structural, not chronometric. Between the two QUBO models the asymmetry is itself a result: a sequence-pair SA seed costs $4\,750\pm 52$ s against $24.2\pm 0.6$ s for set-packing, a factor of 196 in wall time, or roughly 74 after normalising by the 2.7-fold larger read-and-sweep budget, in line with the 64-fold coupler ratio of Section \ref{sec:sizes}.
 
\begin{table}[!t]
\centering\footnotesize
\caption{Wall-clock time by QUBO formulation on the reference instance, measured in one consolidated session. SA budgets are 8\,000 reads $\times$ 2\,000 sweeps (sequence-pair) and 4\,000 $\times$ 1\,500 (set-packing). SQA uses 8\,000 against 1\,500 sweeps. The classical ILP baseline (CBC) solves the instance to proven optimality in approximately 7.4 ms (discussed in text).}
\label{tab:timing}
\begin{tabular}{lrr}
\toprule
Time & Sequence-pair & Set-packing \\
\midrule
SA total, 11 seeds (s) & 52\,253 & 267 \\
SA per seed (s) & $4\,750 \pm 52$ & $24.2 \pm 0.6$ \\
SQA, 1\,000 reads (s) & 2\,438 & 78 \\
Leap hybrid BQM, charge time (s) & 4.49 & 3.00 \\
Leap hybrid BQM, QPU access (s) & 0.104 & 0.098 \\
Leap hybrid CQM, run time (s) & 5.20 & 5.25 \\
\bottomrule
\end{tabular}
\end{table}
 
\subsection{Scaling Projection}
\label{sec:scaling}
Fig. \ref{fig:scaling} plots the set-packing variable count $V_{\mathrm{sp}}$ of \eqref{eq:vsp} against the plate area, together with the sequence-pair count of \eqref{eq:vseq} at several fixed slot budgets $N$, each a horizontal line because the coordinate-free model is independent of the plate resolution and size. Three conclusions follow.

First, in the cutting-stock regime proper, where free repetition forces the budget to scale as $N_{\mathrm{bud}}=\lfloor W'H'/a_{\min}\rfloor$, the sequence-pair count $34\,N_{\mathrm{bud}}^{2}$ grows with the square of the plate area against the linear growth of $V_{\mathrm{sp}}\approx P\,W'H'$, so the coordinate-free model loses at every size: 4\,114 against 182 variables already on the reference plate, and $4.6\cdot 10^{7}$ against $3.6\cdot 10^{4}$ at $70\times 50$.

Second, the variable advantage of the coordinate-free model is confined to the opposite regime, a piece budget bounded exogenously by the application rather than by the plate, together with a large reduced plate: at fixed $N$ its count is constant, so the rising $V_{\mathrm{sp}}$ curve crosses it from below and only past that crossing does the sequence-pair model use fewer variables. The crossing sits at plate area $W'H'\approx 3N^{2}$ after the GCD reduction of Lemma \ref{lem:gcd}, and moves to larger plates as $N$ grows: the $N=6$ budget is overtaken just past the $14\times 10$ plate, whereas $N=12$ needs a plate approaching $28\times 20$.

Third, the set-packing maximum degree saturates at 289 from $15\times 11$ onward, because a placement conflicts only with placements overlapping it, a catalogue-bounded neighbourhood. The graph therefore keeps bounded degree and linear coupler growth as the plate grows, although Section \ref{sec:results} shows that chain lengths near 18 already cap heuristic embedding around 300 logical variables on current hardware, while the sequence-pair coupler count remains dominated by equality cliques growing as $\Theta((N^{2}P)^{2})$ and is non-embeddable outright.

Variable counts alone therefore understate the hardware gap, in both directions. The two formulations consequently diverge over time. The set-packing model is capacity and precision-limited today, near the $8\times 6$ plate, but its bounded degree and linear coupler growth make it the formulation that benefits monotonically from larger and denser future topologies. The sequence-pair model faces two obstacles that hardware scaling does not remove: its equality cliques keep it non-embeddable as the topology parameter grows only linearly in native clique size, and the impossibility result of Section \ref{sec:feasibility} denies it a geometrically feasible ground state on any annealer irrespective of size, so its only viable role, now and later, is as a classical or hybrid sample-and-filter heuristic.
 
\begin{figure}[!t]
\centering
\includegraphics[width=7cm]{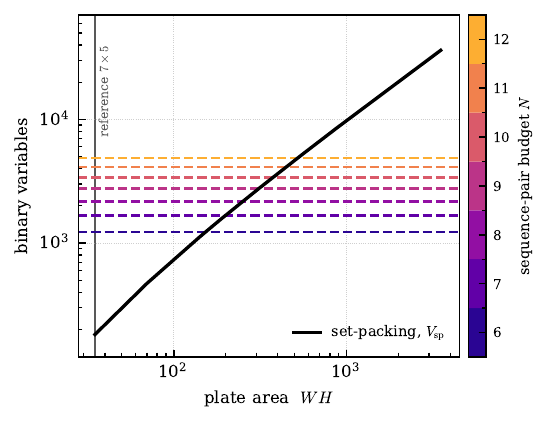}
\caption{Variable-count scaling of the two formulations for the fixed expanded catalogue ($P=11$ orientations, $a_{\min}=3$) across plates with $g=1$, from the closed forms \eqref{eq:vsp} and \eqref{eq:vseq}. The solid black line is the set-packing count $V_{\mathrm{sp}}$, which grows linearly in the plate area $WH$. The dashed lines are the sequence-pair count at fixed slot budgets $N=6,\dots,12$ (value $N^{2}(3P+1)$, with the $O(\log WH)$ slack bits omitted), constant because the coordinate-free model is independent of the plate. For a fixed $N$ the coordinate-free model uses fewer variables only to the right of the point where $V_{\mathrm{sp}}$ crosses its line, that is for plate area $W'H'\gtrsim 3N^{2}$ after GCD reduction, and that crossing moves to larger plates as $N$ grows. The coupler gap is wider still: $V_{\mathrm{sp}}$ keeps catalogue-bounded degree (saturating at $289$ from $15\times 11$ onward) and linear coupler growth, whereas the sequence-pair coupler count is dominated by equality cliques growing as $\Theta((N^{2}P)^{2})$.}
\label{fig:scaling}
\end{figure}
 
\subsection{Multi-Instance Campaign}
\label{sec:campaignres}
Table \ref{tab:campaign} extends the comparison to six instances under the campaign protocol of Section \ref{sec:setup}. Three findings carry over from the reference instance and one is new. First, the set-packing model with the purely analytic multipliers of Proposition \ref{prop:exact}, with no tuning of any kind, reaches the ILP optimum in all 66 runs and returns a geometrically feasible minimum-energy sample in all 66, across reduced plates from $7\times 5$ to $12\times 8$ and four different catalogues. Second, the sequence-pair model, despite a fresh 25-trial calibration per instance, reaches the optimum in 22 of 66 runs, with geometric feasibility between 9.3 and 30.1 percent and a feasible minimum-energy sample in only 3 of 66 runs. Feasibility degrades visibly as the slot budget grows from six to eight (instances C3 and C4). Third, the GCD instance C5 behaves exactly as Lemma \ref{lem:gcd} promises: the raw $14\times 10$ formulation would carry 551 placement variables and 82\,267 conflict couplers, the reduced $7\times 5$ instance carries 182 and 7\,005, and solving it reproduces the reference behaviour, with the physical optimum recovered through the $g^{2}=4$ area scaling of Lemma \ref{lem:gcd} as $4\cdot 35=140$. The new observation is cost: even at the lighter campaign budgets, a sequence-pair seed takes 304 to 2\,899 s against 13 to 45 s for set-packing, so the compact encoding is also the expensive one to sample classically.
 
\begin{table*}[!t]
\centering\footnotesize
\caption{Multi-instance campaign, eleven seeds per model and instance. Set-packing uses the analytic multipliers of Proposition \ref{prop:exact} with no tuning (3\,000 reads $\times$ 1\,200 sweeps). Sequence-pair is recalibrated per instance with 25 TPE trials and sampled at 5\,000 $\times$ 1\,500, with slot budget $N$ set to the ILP piece count plus two (minimum six). $g$ is the divisor of Lemma \ref{lem:gcd}; $V_{\mathrm{sp}}$ and $V_{\mathrm{seq}}$ are the variable counts of the reduced instance that is actually solved, whereas the plate, $A^{\star}$ and the area columns are physical, so for the GCD instance C5 they exceed the reduced run by the factor $g^{2}=4$. C5 is the doubled reference catalogue on a doubled plate ($14\times 10$; 551 variables and 82\,267 couplers before reduction, $A^{\star}=4\cdot 35=140$). Because $N$ and the multipliers differ from the head-to-head protocol, the C1 row is not directly comparable with Table \ref{tab:comparison}. Best area is the per-seed average and is bounded above by $A^{\star}$; per-seed spread is read from the feasibility and at-$A^{\star}$ columns rather than from a symmetric deviation.}
\label{tab:campaign}
\begin{tabular}{llrrrrrrrrrr}
\toprule
 & & & & & & & \multicolumn{2}{c}{Set-packing} & \multicolumn{3}{c}{Sequence-pair} \\
\cmidrule(lr){8-9}\cmidrule(lr){10-12}
Instance & Plate & $g$ & $V_{\mathrm{sp}}$ & $V_{\mathrm{seq}}$ & $A^{\star}$ & $n$ & best area (avg) & at $A^{\star}$ & best area (avg) & feas.\ (\%) & at $A^{\star}$ \\
\midrule
C1 & $7\times 5$ & 1 & 182 & 1\,238 & 35 & 4 & $35.00$ & 11/11 & $34.00$ & 21.1 & 6/11 \\
C2 & $8\times 6$ & 1 & 161 & 1\,023 & 48 & 4 & $48.00$ & 11/11 & $47.18$ & 30.1 & 6/11 \\
C3 & $9\times 6$ & 1 & 205 & 1\,807 & 54 & 6 & $54.00$ & 11/11 & $51.91$ & 10.3 & 2/11 \\
C4 & $10\times 7$ & 1 & 220 & 1\,808 & 70 & 6 & $70.00$ & 11/11 & $67.45$ & 9.3 & 1/11 \\
C5 & $14\times 10$ & 2 & 182 & 1\,238 & 140 & 4 & $140.00$ & 11/11 & $136.00$ & 21.1 & 6/11 \\
C6 & $12\times 8$ & 1 & 342 & 1\,131 & 96 & 4 & $96.00$ & 11/11 & $91.36$ & 13.5 & 1/11 \\
\bottomrule
\end{tabular}
\end{table*}
 
\section{Discussion}
\label{sec:discussion}
 
\subsection{A Practical Selection Rule}
After the GCD reduction of Lemma \ref{lem:gcd}, the recommendation is asymmetric. In the cutting-stock regime proper, free repetition with dense packings, the slot budget must scale as $N\approx W'H'/a_{\min}$ for the coordinate-free model even to represent the optimum, and Fig. \ref{fig:scaling} shows that it then loses on variables at every resolution. The coordinate-based model dominates outright. The coordinate-free model pays off in variable count only when two conditions hold simultaneously: the piece count is bounded a priori by the application (unique items, demand caps, floorplanning-style instances, the setting of \cite{okada2024}), and the reduced plate is fine-grained, approximately $3N^{2}<W'H'$. Even then, variable economy is not the binding constraint on annealers: the global equality penalties make the sequence-pair interaction graph clique-bearing and provably non-embeddable (Proposition \ref{prop:embed}), while the set-packing graph keeps catalogue-bounded degree, linear coupler growth and, within the capacity limits measured in Section \ref{sec:results}, an actual embedding. On classical or hybrid samplers, where both models fit, the coordinate-free model must be operated as a sample-and-filter heuristic, with the feasibility and correlation costs quantified in Section \ref{sec:results} and the ground-state caveat of Proposition \ref{prop:certificate} kept in view whenever a solver returns a single lowest-energy sample. The constraint-native CQM results of Section \ref{sec:results} show that replacing penalties by native constraints does not lift this caveat.

Table \ref{tab:selection} summarises the selection rule.

\begin{table}[!t]
\centering\footnotesize
\caption{Decision rule for choosing between the two formulations, after the GCD reduction of Lemma \ref{lem:gcd}. $N$ is the slot budget and $W'H'$ the reduced plate area. Three of the four operating points select the coordinate-based model.}
\label{tab:selection}
\begin{tabular}{@{}p{2.45cm}p{1.35cm}p{3.05cm}@{}}
\toprule
Operating point & Choose & Reason \\
\midrule
Free repetition, dense packing (any resolution) & Set-packing & Fewer variables and couplers after GCD; exact and embeddable \\
\addlinespace
Bounded pieces, coarse plate ($3N^{2}\!\geq\! W'H'$) & Set-packing & Sequence-pair variable advantage not yet active \\
\addlinespace
Bounded pieces, fine plate, annealing hardware & Set-packing & Sequence-pair non-embeddable (cliques, Prop.~\ref{prop:embed}); set-packing fits to $\sim\!8\!\times\!6$ \\
\addlinespace
Bounded pieces, fine plate, classical or hybrid, heuristic OK & Sequence-pair & Only regime where compact variables pay; heuristic, low feasibility (Prop.~\ref{prop:certificate}) \\
\bottomrule
\end{tabular}
\end{table}
 
\subsection{Limitations}
The hardware study uses a single QPU generation (Advantage\_system4), while Zephyr-topology chips enter only through the bound of Proposition \ref{prop:embed}, and the embedding sweep uses minorminer with default settings, so clique-aware or specialised embedders might push the capacity frontier somewhat beyond the $8\times 6$ plate. The coordinate-free multipliers are calibrated per instance, and Remark \ref{rem:soft} explains why this cannot be fully avoided. SQA schedules were not tuned per formulation. All campaign instances admit perfect packings. Instances with $A^{\star}<W'H'$, where the area budget cannot be saturated, are a natural extension. Finally, the ablation keeps its own fifteen-seed protocol and predates the consolidated session. In contrast, every other number in Section \ref{sec:results} comes from that single session.
 
\subsection{Extensions}
Three directions follow naturally. Radix-2 coordinate encodings occupy a genuine middle point of the axis, with $\Theta(N\log W'H')$ variables, but non-overlap then becomes a high-degree product of coordinate bits, so the faithfulness question of Section \ref{sec:feasibility} reappears after quadratization. Lazy constraint generation sidesteps the $10^{8}$-term blow-up of Section \ref{sec:feasibility}: sample, decode, and quadratize only the violated chain monomials, at the price of multiple solver rounds. Finally, constraint-native hybrid solvers have precedent on real packing workloads \cite{romero2023}, and Section \ref{sec:results} confirms both halves of the expectation: the CQM reformulation removes the equality cliques and solves the set-packing model outright, yet cannot rescue the sequence-pair model, because containment itself is what the representation lacks. Demand-bounded and guillotine variants, which would connect the present study with \cite{arai2021}, and embedding strategies specialised to dense local conflict graphs are the remaining directions.
 
\section{Conclusion}
\label{sec:conclusion}
We compared two QUBO encodings of the fixed-plate 2D Cutting Stock Problem with free repetition that represent fundamentally different encoding paradigms. The coordinate-based set-packing model is exact, of catalogue-bounded degree and linear coupler growth, and minor-embeddable within the capacity of current chips. It was solved to proven optimality by simulated annealing, simulated quantum annealing, and both Leap hybrid solvers on the reference instance, reaching the exact optimum in all 66 campaign runs with purely analytic multipliers. Direct QPU sampling nevertheless fails at every tested calibration, and heuristic embedding caps out near 300 logical variables, exposing two honest limits of present hardware.

The coordinate-free sequence-pair model is compact in variables on fine grids. However, no encoding of its kind with bounded interaction degree and penalties that vanish on feasible layouts can keep its ground state geometrically feasible. We proved this, exhibited an explicit energy-degenerate infeasible certificate of the tuned Hamiltonian, and watched the hybrid solver return it. The certificate energy was reproduced to thirteen decimal places with the decoded layout overflowing the plate, a behaviour that survives unchanged when every expressible constraint is imposed natively in a CQM. This fundamental lack of expressiveness is compounded by a provable minor-embedding separation driven by its global equality cliques.

The resulting selection rule is asymmetric. With free repetition, the coordinate-based model dominates at every resolution after GCD reduction. However, the coordinate-free model becomes a highly valuable alternative when scaling variable counts prohibit exact encodings. Deployed as a sample-and-filter heuristic, it successfully discovers optimal maximum-area layouts, offering a robust practical path when the exact formulation exhausts hardware capacity. We make no claim of quantum speedup. Our contribution is a structural map of the design space that tells a practitioner which encoding to trust, when, and why.

\end{document}